\newenvironment{proof}{\noindent\textbf{Proof}}{\hfill\qed}
\newcommand{\qed}{\hfill$\Box$}
\newtheorem{lem}{Lemma}
\newtheorem{theo}{Theorem}
\newtheorem{defi}{Definition}
\newenvironment{theorem}[1]{\vspace{-0.25cm}\begin{theo}#1}{\end{theo}\vspace{-0.3cm}}
\newenvironment{definition}[1]{\vspace{-0.25cm}\begin{defi}#1}{\end{defi}\vspace{-0.3cm}}
\newenvironment{lemma}[1]{\vspace{-0.25cm}\begin{lem}#1}{\end{lem}\vspace{-0.3cm}}
\begin{document}

\title{The Impact of Topology on Byzantine Containment in Stabilization}

\author{Swan Dubois\protect\footnote{Universit\'e Pierre et Marie Curie \& INRIA, France, swan.dubois@lip6.fr} \footnote{Contact author, Telephone: 33 1 44 27 87 67, Postal address: LIP6, Case 26/00-225, Campus Jussieu, 4 place Jussieu, 75252 Paris Cedex 5, France} \and Toshimitsu Masuzawa\protect\footnote{Osaka University, Japan, masuzawa@ist.osaka-u.ac.jp} \and S\'{e}bastien Tixeuil\protect\footnote{Universit\'e Pierre et Marie Curie \& INRIA, France, sebastien.tixeuil@lip6.fr}}
\date{}

\maketitle

\begin{abstract}
Self-stabilization is an versatile approach to fault-tolerance since it permits a distributed system to recover from any transient fault that arbitrarily corrupts the contents of all memories in the system. Byzantine tolerance is an attractive feature of distributed system that permits to cope with arbitrary malicious behaviors. 

We consider the well known problem of constructing a maximum metric tree in this context. Combining these two properties prove difficult: we demonstrate that it is impossible to contain the impact of Byzantine nodes in a self-stabilizing context for maximum metric tree construction (strict stabilization). We propose a weaker containment scheme called \emph{topology-aware strict stabilization}, and present a protocol for computing maximum metric trees that is optimal for this scheme with respect to impossibility result.
\end{abstract}

\paragraph{Keywords}
Byzantine fault, Distributed protocol, Fault tolerance,
Stabilization, Spanning tree construction

\section{Introduction}

The advent of ubiquitous large-scale distributed systems advocates that tolerance to various kinds of faults and hazards must be included from the very early design of such systems. \emph{Self-stabilization}~\cite{D74j,D00b,T09bc} is a versatile technique that permits forward recovery from any kind of \emph{transient} faults, while \emph{Byzantine Fault-tolerance}~\cite{LSP82j} is traditionally used to mask the effect of a limited number of \emph{malicious} faults. Making distributed systems tolerant to both transient and malicious faults is appealing yet proved difficult~\cite{DW04j,DD05c,NA02c} as impossibility results are expected in many cases.

Two main paths have been followed to study the impact of Byzantine faults in the context of self-stabilization:
\begin{itemize}
\item \emph{Byzantine fault masking.} In completely connected synchronous systems, one of the most studied problems in the context of self-stabilization with Byzantine faults is that of \emph{clock synchronization}. In~\cite{BDH08c,DW04j}, probabilistic self-stabilizing protocols were proposed for up to one third of Byzantine processes, while in \cite{DH07cb,HDD06c} deterministic solutions tolerate up to one fourth and one third of Byzantine processes, respectively.
\item \emph{Byzantine containment.} For \emph{local} tasks (\emph{i.e.} tasks whose correctness can be checked locally, such as vertex coloring, link coloring, or dining philosophers), the notion of \emph{strict stabilization} was proposed~\cite{NA02c,SOM05c,MT07j}. Strict stabilization guarantees that there exists a \emph{containment radius} outside which the effect of permanent faults is masked, provided that the problem specification makes it possible to break the causality chain that is caused by the faults. As many problems are not local, it turns out that it is impossible to provide strict stabilization for those.
\end{itemize}

\noindent\textbf{Our Contribution.} In this paper, we investigate the possibility of Byzantine containment in a self-stabilizing setting for tasks that are global (\emph{i.e.} for with there exists a causality chain of size $r$, where $r$ depends on $n$ the size of the network), and focus on a global problem, namely maximum metric tree construction (see \cite{GS99c,GS03j}). As strict stabilization is impossible with such global tasks, we weaken the containment constraint by relaxing the notion of containment radius to containment area, that is Byzantine processes may disturb infinitely often a set of processes which depends on the topology of the system and on the location of Byzantine processes.

The main contribution of this paper is to present new possibility results for containing the influence of unbounded Byzantine behaviors. In more details, we define the notion of \emph{topology-aware strict stabilization} as the novel form of the containment and introduce \emph{containment area} to quantify the quality of the containment.
The notion of topology-aware strict stabilization is weaker than the strict stabilization but is stronger than the classical notion of self-stabilization (\emph{i.e.} every topology-aware strictly stabilizing protocol is self-stabilizing, but not necessarily strictly stabilizing).

To demonstrate the possibility and effectiveness of our notion of topology-aware strict stabilization, we consider \emph{maximum metric tree construction}. It is shown in \cite{NA02c} that there exists no strictly stabilizing protocol with a constant containment radius for this problem. In this paper, we provide a topology-aware strictly stabilizing protocol for maximum metric tree construction and we prove that the containment area of this protocol is optimal.

\section{Distributed System}

A \emph{distributed system} $S=(P,L)$ consists of a set
$P=\{v_1,v_2,\ldots,v_n\}$ of processes and a set $L$ of
bidirectional communication links (simply called links).
A link is an unordered pair of distinct processes.
A distributed system $S$ can be regarded as a graph whose vertex set is $P$
and whose link set is $L$, so we use graph terminology to describe a
distributed system $S$.

Processes $u$ and $v$ are called \emph{neighbors} if $(u,v)\in L$.
The set of neighbors of a process $v$ is denoted by $N_v$, and its
cardinality (the \emph{degree} of $v$) is denoted by $\Delta_v (=|N_v|)$.
The degree $\Delta$ of a distributed system $S=(P,L)$ is defined as
$\Delta = \max \{\Delta_v\ |\ v \in P\}$.
We do not assume existence of a unique identifier for each process.
Instead we assume each process can distinguish its neighbors from each other
by locally arranging them in some arbitrary order:
the $k$-th neighbor of a process $v$ is denoted by
$N_v(k)\ (1 \le k \le \Delta_v)$. The \emph{distance} between two processes
$u$ and $v$ is the length of the shortest path between $u$ and $v$.

In this paper, we consider distributed systems of arbitrary topology.
We assume that a single process is distinguished as a \emph{root},
and all the other processes are identical.

We adopt the \emph{shared state model} as a communication model
in this paper, where each process can directly read the states
of its neighbors.

The variables that are maintained by processes denote process states.
A process may take actions during the execution of the system. An
action is simply a function that is executed in an atomic manner
by the process.
The actions executed by each process is described by a finite set
of guarded actions of the form
$\langle$guard$\rangle\longrightarrow\langle$statement$\rangle$.
Each guard of process $u$ is a boolean expression involving
the variables of $u$ and its neighbors.

A global state of a distributed system is called a \emph{configuration}
and is specified by a product of states of all processes.
We define $C$ to be the set of all possible configurations
of a distributed system $S$.
For a process set $R \subseteq P$ and two configurations $\rho$ and $\rho'$,
we denote $\rho \stackrel{R}{\mapsto} \rho'$
when $\rho$ changes to $\rho'$ by executing an action of each process
in $R$ simultaneously.
Notice that $\rho$ and $\rho'$ can be different only in
the states of processes in $R$.
For completeness of execution semantics, we should clarify
the configuration resulting from simultaneous actions of
neighboring processes.
The action of a process depends only on its state
at $\rho$ and the states of its neighbors at $\rho$,
and the result of the action reflects on the state of the process
at $\rho '$.

A \emph{schedule} of a distributed system is an infinite sequence of
process sets.  Let $Q=R^1, R^2, \ldots$  be a schedule,
where $R^i \subseteq P$ holds for each $i\ (i \ge 1)$.
An infinite sequence of configurations
$e=\rho_0,\rho_1,\ldots$ is called an \emph{execution} from
an initial configuration $\rho_0$ by a schedule $Q$,
if $e$ satisfies $\rho_{i-1} \stackrel{R^i}{\mapsto} \rho_i$
for each $i\ (i \ge 1)$.
Process actions are executed atomically, and we also assume
that a \emph{distributed daemon} schedules the actions of processes,
i.e. any subset of processes can simultaneously execute
their actions.

The set of all possible executions from
$\rho_0\in C$ is denoted by $E_{\rho_0}$.
The set of all possible executions is denoted by $E$, that is,
$E=\bigcup_{\rho\in C}E_{\rho}$.
We consider \emph{asynchronous} distributed systems
where we can make no assumption
on schedules except that any schedule is \emph{weakly fair}:
every process is contained in infinite number of subsets
appearing in any schedule.

In this paper, we consider (permanent) \emph{Byzantine faults}:
a Byzantine process (i.e. a Byzantine-faulty process)
can make arbitrary behavior independently from its actions.
If $v$ is a Byzantine process,
$v$ can repeatedly change its variables arbitrarily.

\section{Self-Stabilizing Protocol Resilient to Byzantine Faults}\label{sec:stab}

Problems considered in this paper are so-called \emph{static problems}, 
i.e. they require the system to find static solutions.
For example, the spanning-tree construction problem is a static problem,
while the mutual exclusion problem is not.
Some static problems can be defined by a \emph{specification predicate}
(shortly, specification), $spec(v)$, for each process $v$:
a configuration is a desired one (with a solution) if 
every process satisfies $spec(v)$.
A specification $spec(v)$ is a boolean expression
on variables of $P_v~(\subseteq P)$ where $P_v$ is the set of processes
whose variables appear in $spec(v)$.
The variables appearing in the specification are
called \emph{output variables} (shortly, \emph{O-variables}).
In what follows, we consider a static problem defined by
specification $spec(v)$.

\noindent\textbf{Self-Stabilization.} A \emph{self-stabilizing protocol} (\cite{D74j}) is a protocol
that eventually reaches a \emph{legitimate configuration},
where $spec(v)$ holds at every process $v$, regardless of the initial configuration.
Once it reaches a legitimate configuration, every process never
changes its O-variables and always satisfies $spec(v)$.
 From this definition, a self-stabilizing protocol is expected to tolerate 
any number and any type of transient faults since it can eventually 
recover from any configuration affected by the transient faults.
However, the recovery from any configuration is guaranteed
only when every process correctly executes its action from 
the configuration, i.e., we do not consider existence of
permanently faulty processes.

\noindent\textbf{Strict stabilization.} When (permanent) Byzantine 
processes exist, Byzantine processes may not satisfy $spec(v)$.
In addition, correct processes near the Byzantine processes
can be influenced and may be unable to satisfy $spec(v)$.
Nesterenko and Arora~\cite{NA02c} define
a \emph{strictly stabilizing protocol} as a self-stabilizing protocol 
resilient to unbounded number of Byzantine processes.

Given an integer $c$, a \emph{$c$-correct process} is a process 
 defined as follows.

\begin{definition}[$c$-correct process]
A process is $c$-correct if it is correct (\emph{i.e.} not Byzantine) and located at distance more than $c$ from any Byzantine process.
\end{definition}

\begin{definition}[$(c,f)$-containment]
\label{def:cfcontained}
A configuration $\rho$ is \emph{$(c,f)$-contained} for specification
$spec$ if, given at most $f$ Byzantine processes, in any execution
starting from $\rho$, every $c$-correct process $v$ always satisfies $spec(v)$ and never changes
its O-variables.
\end{definition}

The parameter $c$ of Definition~\ref{def:cfcontained} refers to 
the \emph{containment radius} defined in \cite{NA02c}. 
The parameter $f$ refers explicitly to the number of Byzantine processes, 
while \cite{NA02c} dealt with unbounded number of Byzantine faults 
(that is $f\in\{0\ldots n\}$).

\begin{definition}[$(c,f)$-strict stabilization]
\label{def:cfstabilizing}
A protocol is \emph{$(c,f)$-strictly stabilizing} for specification
$spec$ if, given at most $f$ Byzantine processes, any execution
$e=\rho_0,\rho_1,\ldots$ contains a configuration $\rho_i$ that
is $(c,f)$-contained for $spec$.
\end{definition}

An important limitation of the model of \cite{NA02c}
is the notion of $r$-\emph{restrictive} specifications. 
Intuitively, a specification is $r$-restrictive if it prevents 
combinations of states that belong to two processes $u$ and $v$ 
that are at least $r$ hops away. 
An important consequence related to Byzantine tolerance is that 
the containment radius of protocols solving those specifications is 
at least $r$. 
For some problems, such as the spanning tree construction we consider
in this paper, $r$ can not be bounded to a constant.
We can show that there exists no $(o(n),1)$-strictly stabilizing
protocol for the spanning tree construction.

\noindent\textbf{Topology-aware strict stabilization.} In the former paragraph, we saw that there exist a number of impossibility results on strict stabilization due to the notion of $r$-restrictive specifications. To circumvent this impossibility result, we define here a new notion, which is weaker than the strict stabilization: the \emph{topology-aware strict stabilization} (denoted by TA-strict stabilization for short). Here, the requirement to the containment radius is relaxed, \emph{i.e.} the set of processes which may be disturbed by Byzantines ones is not reduced to the union of $c$-neighborhood of Byzantines processes but can be defined depending on the topology of the system and on Byzantine processes location.

In the following, we give formal definition of this new kind of Byzantine containment. From now, $B$ denotes the set of Byzantine processes and $S_B$ (which is function of $B$) denotes a subset of $V$ (intuitively, this set gathers all processes which may be disturbed by Byzantine processes).

\begin{definition}[$S_{B}$-correct node]
A node is \emph{$S_{B}$-correct} if it is a correct node (\emph{i.e.} not Byzantine) which not belongs to $S_{B}$.
\end{definition}

\begin{definition}[$S_{B}$-legitimate configuration]
A configuration $\rho$ is \emph{$S_{B}$-legitimate} for $spec$ if every $S_{B}$-correct node $v$ is legitimate for $spec$ (\emph{i.e.} if $spec(v)$ holds).
\end{definition}

\begin{definition}[$(S_{B},f)$-topology-aware containment]
\label{def:SfTAcontained}
A configuration $\rho_{0}$ is \emph{$(S_{B},f)$-topology-aware contained} for specification $spec$ if, given at most $f$ Byzantine processes, in any execution $e=\rho_0,\rho_1,\ldots$, every configuration is $S_{B}$-legitimate and every $S_B$-correct process never changes its O-variables. 
\end{definition}

The parameter $S_{B}$ of Definition~\ref{def:SfTAcontained} refers to the \emph{containment area}. Any process which belongs to this set may be infinitely disturbed by Byzantine processes. The parameter $f$ refers explicitly to the number of Byzantine processes.

\begin{definition}[$(S_{B},f)$-topology-aware strict stabilization]
\label{def:SfTAStrictstabilizing}
A protocol is \emph{$(S_{B},f)$-topology-\\\noindent aware strictly stabilizing} for specification $spec$ if, given at most $f$ Byzantine processes, any execution $e=\rho_0,\rho_1,\ldots$ contains a configuration $\rho_i$ that is $(S_{B},f)$-topology-aware contained for $spec$.
\end{definition}

Note that, if $B$ denotes the set of Byzantine processes and $S_{B}=\{v\in V|min\{d(v,b),b\in B\}\leq c\}$, then a $(S_{B},f)$-topology-aware strictly stabilizing protocol is a $(c,f)$-strictly stabilizing protocol. Then, a TA-strictly stabilizing protocol is generally weaker than a strictly stabilizing one, but stronger than a classical self-stabilizing protocol (that may never meet its specification in the presence of Byzantine processes).

The parameter $S_{B}$ is introduced to quantify the strength of fault containment, we do not require each process to know the actual definition of the set. Actually, the protocol proposed in this paper assumes no knowledge on this parameter.

\section{Maximum Metric Tree Construction}

In this work, we deal with maximum (routing) metric trees as defined in \cite{GS03j} (note that \cite{GS99c} provides a self-stabilizing solution to this problem). Informally, the goal of a routing protocol is to construct a tree that simultaneously maximizes the metric values of all of the nodes with respect to some total ordering $\prec$. In the following, we recall all definitions and notations introduced in \cite{GS03j}. 

\begin{definition}[Routing metric]
A \emph{routing metric} (or just \emph{metric}) is a five-tuple $(M,W,met,mr,$ $\prec)$ where:
\begin{enumerate}
\item $M$ is a set of metric values,
\item $W$ is a set of edge weights,
\item $met$ is a metric function whose domain is $M\times W$ and whose range is $M$,
\item $mr$ is the maximum metric value in $M$ with respect to $\prec$ and is assigned to the root of the system,
\item $\prec$ is a less-than total order relation over $M$ that satisfies the following three conditions for arbitrary metric values $m$, $m'$, and $m''$ in $M$:
\begin{enumerate}
\item irreflexivity: $m\not\prec m$,
\item transitivity : if $m\prec m'$ and $m'\prec m''$ then $m\prec m''$,
\item totality: $m\prec m'$ or $m'\prec m$ or $m=m'$.
\end{enumerate}
\end{enumerate}
Any metric value $m\in M\setminus\{mr\}$ satisfies the \emph{utility condition} (that is, there exists $w_0,\ldots,w_{k-1}$ in $W$ and $m_0=mr,m_1,\ldots,m_{k-1},m_{k}=m$ in $M$ such that $\forall i\in\{1,\ldots,k\},m_i=met(m_{i-1},w_{i-1})$).
\end{definition}

For instance, we provide the definition of three classical metrics with this model: the shortest path metric ($\mathcal{SP}$), the flow metric ($\mathcal{F}$), and the reliability metric ($\mathcal{R}$).

\[\begin{array}{rclrcl}
\mathcal{SP}&=&(M_1,W_1,met_1,mr_1,\prec_1)&\mathcal{F}&=&(M_2,W_2,met_2,mr_2,\prec_2)\\
\text{where}& & M_1=\mathbb{N}&\text{where}& & mr_2\in\mathbb{N}\\
&& W_1=\mathbb{N}&&& M_2=\{0,\ldots,mr_2\}\\
&& met_1(m,w)=m+w&&& W_2=\{0,\ldots,mr_2\}\\
&& mr_1=0&&& met_2(m,w)=min\{m,w\}\\
&& \prec_1 \text{ is the classical }>\text{ relation}&&& \prec_2 \text{ is the classical }<\text{ relation}
\end{array}\]
\[\begin{array}{rcl}
\mathcal{R}&=&(M_3,W_3,met_3,mr_3,\prec_3)\\
\text{where}& & M_3=[0,1]\\
&& W_3=[0,1]\\
&& met_3(m,w)=m*w\\
&& mr_3=1\\
&& \prec_3 \text{ is the classical }<\text{ relation}
\end{array}\]

\begin{definition}[Assigned metric]
An \emph{assigned metric} over a system $S$ is a six-tuple $(M,W,met,$ $mr,\prec,wf)$ where $(M,W,met,mr,\prec)$ is a metric and $wf$ is a function that assigns to each edge of $S$ a weight in $W$.
\end{definition}

Let a rooted path (from $v$) be a simple path from a process $v$ to the root $r$. The next set of definitions are with respect to an assigned metric $(M,W,met,mr,\prec,wf)$ over a given system $S$.

\begin{definition}[Metric of a rooted path]
The \emph{metric of a rooted path} in $S$ is the prefix sum of $met$ over the edge weights in the path and $mr$.
\end{definition}

For example, if a rooted path $p$ in $S$ is $v_k,\ldots,v_0$ with $v_0=r$, then the metric of $p$ is $m_k=met(m_{k-1},wf(\{v_k,v_{k-1}\})$ with $\forall i\in\{1,k-1\},m_i=met(m_{i-1},wf(\{v_i,v_{i-1}\})$ and $m_0=mr$.

\begin{definition}[Maximum metric path]
A rooted path $p$ from $v$ in $S$ is called a \emph{maximum metric path} with respect to an assigned metric if and only if for every other rooted path $q$ from $v$ in $S$, the metric of $p$ is greater than or equal to the metric of $q$ with respect to the total order $\prec$. 
\end{definition}
 
\begin{definition}[Maximum metric of a node]
The \emph{maximum metric of a node} $v\neq r$ (or simply \emph{metric value} of $v$) in $S$ is defined by the metric of a maximum metric path from $v$. The maximum metric of $r$ is $mr$. 
\end{definition}

\begin{definition}[Maximum metric tree]
A spanning tree $T$ of $S$ is a \emph{maximum metric tree} with respect to an assigned metric over $S$ if and only if every rooted path in $T$ is a maximum metric path in $N$ with respect to the assigned metric.
\end{definition}

The goal of the work of \cite{GS03j} is the study of metrics that always allow the construction of a maximum metric tree. More formally, the definition follow.

\begin{definition}[Maximizable metric]
A metric is \emph{maximizable} if and only if for any assignment of this metric over any system $S$, there is a maximum metric tree for $S$ with respect to the assigned metric.
\end{definition}

Note that \cite{GS99c} provides a self-stabilizing protocol to construct a maximum metric tree with respect to any maximizable metric. Moreover, \cite{GS03j} provides a fully characterization of maximizable metrics as follow.

\begin{definition}[Boundedness]
A metric $(M,W,met,mr,\prec)$ is \emph{bounded} if and only if: $\forall m \in M,\forall w\in W, met(m,w)\prec m \text{ or }met(m,w)=m$
\end{definition}

\begin{definition}[Monotonicity]
A metric $(M,W,met,mr,\prec)$ is \emph{monotonic} if and only if: $\forall (m,$ $m')\in M^2,\forall w\in W, m\prec m'\Rightarrow (met(m,w)\prec met(m',w)\text{ or }met(m,w)=met(m',w))$
\end{definition}

\begin{theorem}[Characterization of maximizable metrics \cite{GS03j}]
A metric is maximizable if and only if this metric is bounded and monotonic.
\end{theorem}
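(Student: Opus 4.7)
The statement is an ``if and only if,'' so the plan is to prove the two directions separately.

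For the \emph{sufficiency} direction, fix any assigned metric on a system $S$ and let $\mu(v)$ denote the $\prec$-maximum metric among all rooted paths from $v$; this is well defined because $S$ is finite and rooted paths are simple. I would then construct a candidate tree by picking, for each non-root $v$, a parent $p(v) \in N_v$ satisfying $met(\mu(p(v)), wf(\{v,p(v)\})) = \mu(v)$. Monotonicity gives an optimal-substructure property, namely that every prefix of a maximum metric path is itself a maximum metric path, so such a parent always exists. To show that the $p(\cdot)$ pointers form a spanning tree rooted at $r$, I would process the nodes in $\prec$-decreasing order of $\mu$-value (the root first, since $\mu(r) = mr$ is the $\prec$-maximum of $M$); boundedness implies $\mu(p(v)) \succeq \mu(v)$, so each node attaches to an already-processed ancestor and no cycle is ever created. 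A second application of monotonicity, by induction along the constructed tree, then confirms that the rooted path in the tree from every $v$ has metric $\mu(v)$, as required.

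For the \emph{necessity} direction I would prove the two contrapositives by small counterexample systems. If the metric is not bounded, pick $m \in M$ and $w \in W$ with $m \prec met(m,w)$; using the utility condition, I would attach to the root a gadget realizing $m$ at some node $u$, add a detour through an edge of weight $w$ that traps $u$ into a ``longer is strictly better'' situation, and add a sibling $v$ whose own maximum metric disagrees with what any spanning tree must commit to at $u$. If the metric is not monotonic, pick $m \prec m'$ and $w$ with $met(m',w) \prec met(m,w)$; I would realize both $m$ and $m'$ as path metrics ending at a common vertex $u$, via two edge-disjoint gadgets hanging off the root, then attach a child $v$ to $u$ via an edge of weight $w$. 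Any spanning tree must route $u$ via the $m'$-path (otherwise $u$ is not maximum), which forces $v$ to have metric $met(m',w)$, strictly worse than the $met(m,w)$ achievable by a different routing, contradicting the existence of a tree that is maximum at both $u$ and $v$.

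The step I expect to be most delicate is the construction of the counterexample gadgets in the necessity direction. The utility condition only guarantees the existence of \emph{some} sequence of weights producing a given metric value, so the gadgets realizing $m$ and $m'$ at the same vertex $u$ must be built by prepending potentially long path fragments, and I would have to verify that these fragments do not inadvertently offer $u$ or $v$ an even better alternative route that would save the existence of a maximum metric tree and defeat the construction. The sufficiency direction is comparatively routine once boundedness and monotonicity are used, respectively, for acyclicity and for the optimal-substructure argument; the only subtlety there is handling ties under $\prec$ consistently when defining $p(v)$ and the processing order on $\mu$-values.
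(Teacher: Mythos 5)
Note first that the paper offers no proof of this statement: it is quoted from \cite{GS03j} as a known result, so there is no in-paper argument to measure yours against. Your overall plan --- a greedy, Dijkstra-like tree construction for sufficiency and small counterexample networks for necessity --- is the strategy of that reference, and your unboundedness gadget can be made to work: from $met(m,w)\succ m$ one gets $m\prec mr$, so $m$ is realizable at a non-root vertex by the utility condition, and two disjoint chains realizing $m$ at $u$ and at $u'$, joined by an edge of weight $w$, force either $u$ or $u'$ to miss its maximum in every spanning tree. Two steps, however, are genuinely broken as written. In the sufficiency direction, the claim that every prefix of a maximum metric path is itself a maximum metric path is false: for the flow metric, a vertex $u$ with rooted paths of metrics $3$ and $7$ and a child $v$ over an edge of weight $2$ has a maximum metric path for $v$ (of metric $2$) whose prefix at $u$ has metric $3\prec 7=\mu(u)$. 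What monotonicity actually yields --- and what your construction needs --- is the weaker fact that \emph{some} neighbor $p$ of $v$ satisfies $met(\mu(p),wf(\{v,p\}))=\mu(v)$, obtained by upgrading the prefix to a maximum metric path of the penultimate vertex and using boundedness to dismiss the case where the upgraded path passes through $v$ and is no longer simple. The tie-breaking issue you defer is also not cosmetic: under the flow metric an entire region can share one $\mu$-value and arbitrary parent choices form cycles; this is precisely the difficulty that forces the paper's own algorithm to carry a distance variable bounded by $D$.

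The more serious hole is in the non-monotonicity gadget. You propose to realize both $m$ and $m'$ (with $m\prec m'$ and $met(m',w)\prec met(m,w)$) as metrics of rooted paths ending at a common non-root vertex $u$. The utility condition only covers $M\setminus\{mr\}$, and nothing excludes $m'=mr$; in that case a non-root vertex carries a rooted path of metric $mr$ only if some $w_0$ satisfies $met(mr,w_0)=mr$, which need not hold, and the two-chain construction collapses. The natural repair --- letting the root itself play the role of the $m'$-parent of the conflicted child --- no longer produces a conflict, because the child can simply adopt the $m$-parent without obstructing anyone else's optimal route. You would need either a separate treatment of this case or an argument that a violation with $m'\prec mr$ can always be extracted; as the sketch stands, the necessity of monotonicity is not established.
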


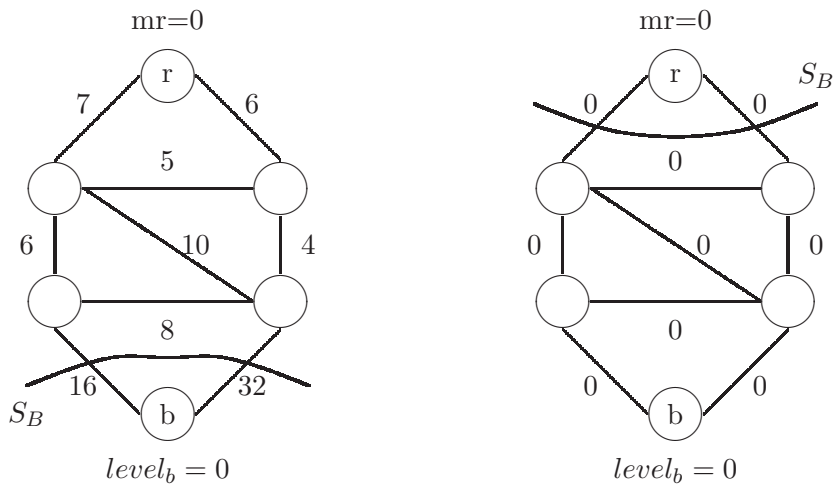
\begin{figure}[t]
\noindent \begin{centering} \ifx\JPicScale\undefined\def\JPicScale{0.75}\fi
\unitlength \JPicScale mm
\begin{picture}(165,95)(0,0)
\linethickness{0.3mm}
\put(50,85){\circle{10}}

\linethickness{0.3mm}
\put(30,65){\circle{10}}

\linethickness{0.3mm}
\put(70,65){\circle{10}}

\linethickness{0.3mm}
\put(30,45){\circle{10}}

\linethickness{0.3mm}
\put(70,45){\circle{10}}

\linethickness{0.3mm}
\put(50,25){\circle{10}}

\linethickness{0.3mm}
\multiput(30,70)(0.12,0.12){125}{\line(1,0){0.12}}
\linethickness{0.3mm}
\multiput(55,85)(0.12,-0.12){125}{\line(1,0){0.12}}
\linethickness{0.3mm}
\put(30,50){\line(0,1){10}}
\linethickness{0.3mm}
\multiput(30,40)(0.12,-0.12){125}{\line(1,0){0.12}}
\linethickness{0.3mm}
\multiput(55,25)(0.12,0.12){125}{\line(1,0){0.12}}
\linethickness{0.3mm}
\put(70,50){\line(0,1){10}}
\linethickness{0.3mm}
\put(35,65){\line(1,0){30}}
\linethickness{0.3mm}
\put(35,45){\line(1,0){30}}
\linethickness{0.3mm}
\multiput(35,65)(0.18,-0.12){167}{\line(1,0){0.18}}
\linethickness{0.3mm}
\put(140,85){\circle{10}}

\linethickness{0.3mm}
\put(120,65){\circle{10}}

\linethickness{0.3mm}
\put(160,65){\circle{10}}

\linethickness{0.3mm}
\put(120,45){\circle{10}}

\linethickness{0.3mm}
\put(160,45){\circle{10}}

\linethickness{0.3mm}
\put(140,25){\circle{10}}

\linethickness{0.3mm}
\multiput(120,70)(0.12,0.12){125}{\line(1,0){0.12}}
\linethickness{0.3mm}
\multiput(145,85)(0.12,-0.12){125}{\line(1,0){0.12}}
\linethickness{0.3mm}
\put(120,50){\line(0,1){10}}
\linethickness{0.3mm}
\multiput(120,40)(0.12,-0.12){125}{\line(1,0){0.12}}
\linethickness{0.3mm}
\multiput(145,25)(0.12,0.12){125}{\line(1,0){0.12}}
\linethickness{0.3mm}
\put(160,50){\line(0,1){10}}
\linethickness{0.3mm}
\put(125,65){\line(1,0){30}}
\linethickness{0.3mm}
\put(125,45){\line(1,0){30}}
\linethickness{0.3mm}
\multiput(125,65)(0.18,-0.12){167}{\line(1,0){0.18}}
\put(50,85){\makebox(0,0)[cc]{r}}

\put(140,85){\makebox(0,0)[cc]{r}}

\put(140,25){\makebox(0,0)[cc]{b}}

\put(50,25){\makebox(0,0)[cc]{b}}

\put(165,85){\makebox(0,0)[cc]{$S_B$}}

\put(50,95){\makebox(0,0)[cc]{mr=0}}

\put(140,95){\makebox(0,0)[cc]{mr=0}}

\put(50,15){\makebox(0,0)[cc]{$level_b=0$}}

\put(140,15){\makebox(0,0)[cc]{$level_b=0$}}

\put(35,80){\makebox(0,0)[cc]{7}}

\put(65,80){\makebox(0,0)[cc]{6}}

\put(50,70){\makebox(0,0)[cc]{5}}

\put(75,55){\makebox(0,0)[cc]{4}}

\put(55,55){\makebox(0,0)[cc]{10}}

\put(50,40){\makebox(0,0)[cc]{8}}

\put(25,55){\makebox(0,0)[cc]{6}}

\put(65,30){\makebox(0,0)[cc]{32}}

\put(35,30){\makebox(0,0)[cc]{16}}

\linethickness{0.3mm}
\qbezier(25,30)(25.43,30.23)(30.48,32.17)
\qbezier(30.48,32.17)(35.53,34.11)(40,35)
\qbezier(40,35)(42.53,35.36)(44.98,35.21)
\qbezier(44.98,35.21)(47.43,35.05)(50,35)
\qbezier(50,35)(52.57,35.05)(55.02,35.21)
\qbezier(55.02,35.21)(57.47,35.36)(60,35)
\qbezier(60,35)(64.47,34.11)(69.52,32.17)
\qbezier(69.52,32.17)(74.57,30.23)(75,30)
\put(125,80){\makebox(0,0)[cc]{0}}

\put(155,80){\makebox(0,0)[cc]{0}}

\put(140,70){\makebox(0,0)[cc]{0}}

\put(165,55){\makebox(0,0)[cc]{0}}

\put(115,55){\makebox(0,0)[cc]{0}}

\put(155,30){\makebox(0,0)[cc]{0}}

\put(125,30){\makebox(0,0)[cc]{0}}

\put(140,40){\makebox(0,0)[cc]{0}}

\put(145,55){\makebox(0,0)[cc]{0}}

\linethickness{0.3mm}
\qbezier(115,80)(115.43,79.77)(120.48,77.83)
\qbezier(120.48,77.83)(125.53,75.89)(130,75)
\qbezier(130,75)(135.08,74.17)(140,74.17)
\qbezier(140,74.17)(144.92,74.17)(150,75)
\qbezier(150,75)(153.99,75.69)(157.63,77.04)
\qbezier(157.63,77.04)(161.26,78.4)(165,80)
\put(25,25){\makebox(0,0)[cc]{$S_B$}}

\end{picture}
  \par\end{centering}
 \caption{Examples of containment areas for SP spanning tree construction.}
\label{fig:ExSP}
\end{figure}

\begin{figure}[t]
\noindent \begin{centering} \ifx\JPicScale\undefined\def\JPicScale{0.75}\fi
\unitlength \JPicScale mm
\begin{picture}(173,95)(0,0)
\linethickness{0.3mm}
\put(50,85){\circle{10}}

\linethickness{0.3mm}
\put(30,65){\circle{10}}

\linethickness{0.3mm}
\put(70,65){\circle{10}}

\linethickness{0.3mm}
\put(30,45){\circle{10}}

\linethickness{0.3mm}
\put(70,45){\circle{10}}

\linethickness{0.3mm}
\put(50,25){\circle{10}}

\linethickness{0.3mm}
\multiput(30,70)(0.12,0.12){125}{\line(1,0){0.12}}
\linethickness{0.3mm}
\multiput(55,85)(0.12,-0.12){125}{\line(1,0){0.12}}
\linethickness{0.3mm}
\put(30,50){\line(0,1){10}}
\linethickness{0.3mm}
\multiput(30,40)(0.12,-0.12){125}{\line(1,0){0.12}}
\linethickness{0.3mm}
\multiput(55,25)(0.12,0.12){125}{\line(1,0){0.12}}
\linethickness{0.3mm}
\put(70,50){\line(0,1){10}}
\linethickness{0.3mm}
\put(35,65){\line(1,0){30}}
\linethickness{0.3mm}
\put(35,45){\line(1,0){30}}
\linethickness{0.3mm}
\multiput(35,65)(0.18,-0.12){167}{\line(1,0){0.18}}
\linethickness{0.3mm}
\put(140,85){\circle{10}}

\linethickness{0.3mm}
\put(120,65){\circle{10}}

\linethickness{0.3mm}
\put(160,65){\circle{10}}

\linethickness{0.3mm}
\put(120,45){\circle{10}}

\linethickness{0.3mm}
\put(160,45){\circle{10}}

\linethickness{0.3mm}
\put(140,25){\circle{10}}

\linethickness{0.3mm}
\multiput(120,70)(0.12,0.12){125}{\line(1,0){0.12}}
\linethickness{0.3mm}
\multiput(145,85)(0.12,-0.12){125}{\line(1,0){0.12}}
\linethickness{0.3mm}
\put(120,50){\line(0,1){10}}
\linethickness{0.3mm}
\multiput(120,40)(0.12,-0.12){125}{\line(1,0){0.12}}
\linethickness{0.3mm}
\multiput(145,25)(0.12,0.12){125}{\line(1,0){0.12}}
\linethickness{0.3mm}
\put(160,50){\line(0,1){10}}
\linethickness{0.3mm}
\put(125,65){\line(1,0){30}}
\linethickness{0.3mm}
\put(125,45){\line(1,0){30}}
\linethickness{0.3mm}
\multiput(125,65)(0.18,-0.12){167}{\line(1,0){0.18}}
\put(50,85){\makebox(0,0)[cc]{r}}

\put(140,85){\makebox(0,0)[cc]{r}}

\put(140,25){\makebox(0,0)[cc]{b}}

\put(50,25){\makebox(0,0)[cc]{b}}

\put(50,95){\makebox(0,0)[cc]{mr=10}}

\put(140,95){\makebox(0,0)[cc]{mr=10}}

\put(35,80){\makebox(0,0)[cc]{7}}

\put(65,80){\makebox(0,0)[cc]{6}}

\put(50,70){\makebox(0,0)[cc]{5}}

\put(75,55){\makebox(0,0)[cc]{4}}

\put(55,55){\makebox(0,0)[cc]{10}}

\put(25,55){\makebox(0,0)[cc]{6}}

\put(50,40){\makebox(0,0)[cc]{8}}

\put(65,30){\makebox(0,0)[cc]{32}}

\put(30,30){\makebox(0,0)[cc]{16}}

\put(140,15){\makebox(0,0)[cc]{$level_b=10$}}

\put(50,15){\makebox(0,0)[cc]{$level_b=10$}}

\put(120,30){\makebox(0,0)[cc]{11}}

\put(155,30){\makebox(0,0)[cc]{12}}

\put(155,80){\makebox(0,0)[cc]{10}}

\put(120,80){\makebox(0,0)[cc]{7}}

\put(165,55){\makebox(0,0)[cc]{13}}

\put(140,70){\makebox(0,0)[cc]{6}}

\put(145,55){\makebox(0,0)[cc]{5}}

\put(115,55){\makebox(0,0)[cc]{3}}

\put(140,40){\makebox(0,0)[cc]{1}}

\linethickness{0.3mm}
\qbezier(105,53.86)(105.43,54.17)(110.46,56.36)
\qbezier(110.46,56.36)(115.5,58.55)(120,58.86)
\qbezier(120,58.86)(124.01,58.43)(127.58,55.9)
\qbezier(127.58,55.9)(131.14,53.36)(135,53.86)
\qbezier(135,53.86)(139.98,55.48)(142.9,60.38)
\qbezier(142.9,60.38)(145.83,65.27)(150,68.86)
\qbezier(150,68.86)(152.33,70.57)(154.77,71.87)
\qbezier(154.77,71.87)(157.21,73.18)(160,73.86)
\qbezier(160,73.86)(162.53,74.5)(165.07,74.65)
\qbezier(165.07,74.65)(167.61,74.79)(170,73.86)
\put(173,68.71){\makebox(0,0)[cc]{$S_B$}}

\linethickness{0.3mm}
\qbezier(20,75)(20.66,75.25)(27.63,75.98)
\qbezier(27.63,75.98)(34.6,76.72)(40,75)
\qbezier(40,75)(41.65,74.26)(42.8,72.89)
\qbezier(42.8,72.89)(43.96,71.52)(45,70)
\qbezier(45,70)(46.51,67.55)(47.23,64.72)
\qbezier(47.23,64.72)(47.94,61.9)(50,60)
\qbezier(50,60)(54.2,56.72)(59.43,55.56)
\qbezier(59.43,55.56)(64.65,54.39)(70,55)
\qbezier(70,55)(72.83,55.38)(75.26,56.81)
\qbezier(75.26,56.81)(77.68,58.25)(80,60)
\put(15,70){\makebox(0,0)[cc]{$S_B$}}

\end{picture}
  \par\end{centering}
 \caption{Examples of containment areas for flow spanning tree construction.}
\label{fig:ExFlow}
\end{figure}

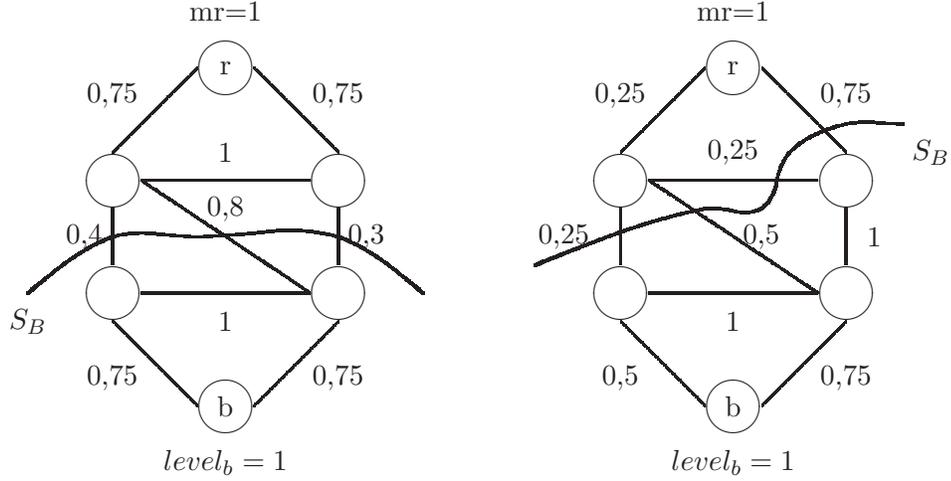
\begin{figure}[t]
\noindent \begin{centering} \ifx\JPicScale\undefined\def\JPicScale{0.75}\fi
\unitlength \JPicScale mm
\begin{picture}(175,95)(0,0)
\linethickness{0.3mm}
\put(50,85){\circle{10}}

\linethickness{0.3mm}
\put(30,65){\circle{10}}

\linethickness{0.3mm}
\put(70,65){\circle{10}}

\linethickness{0.3mm}
\put(30,45){\circle{10}}

\linethickness{0.3mm}
\put(70,45){\circle{10}}

\linethickness{0.3mm}
\put(50,25){\circle{10}}

\linethickness{0.3mm}
\multiput(30,70)(0.12,0.12){125}{\line(1,0){0.12}}
\linethickness{0.3mm}
\multiput(55,85)(0.12,-0.12){125}{\line(1,0){0.12}}
\linethickness{0.3mm}
\put(30,50){\line(0,1){10}}
\linethickness{0.3mm}
\multiput(30,40)(0.12,-0.12){125}{\line(1,0){0.12}}
\linethickness{0.3mm}
\multiput(55,25)(0.12,0.12){125}{\line(1,0){0.12}}
\linethickness{0.3mm}
\put(70,50){\line(0,1){10}}
\linethickness{0.3mm}
\put(35,65){\line(1,0){30}}
\linethickness{0.3mm}
\put(35,45){\line(1,0){30}}
\linethickness{0.3mm}
\multiput(35,65)(0.18,-0.12){167}{\line(1,0){0.18}}
\linethickness{0.3mm}
\put(140,85){\circle{10}}

\linethickness{0.3mm}
\put(120,65){\circle{10}}

\linethickness{0.3mm}
\put(160,65){\circle{10}}

\linethickness{0.3mm}
\put(120,45){\circle{10}}

\linethickness{0.3mm}
\put(160,45){\circle{10}}

\linethickness{0.3mm}
\put(140,25){\circle{10}}

\linethickness{0.3mm}
\multiput(120,70)(0.12,0.12){125}{\line(1,0){0.12}}
\linethickness{0.3mm}
\multiput(145,85)(0.12,-0.12){125}{\line(1,0){0.12}}
\linethickness{0.3mm}
\put(120,50){\line(0,1){10}}
\linethickness{0.3mm}
\multiput(120,40)(0.12,-0.12){125}{\line(1,0){0.12}}
\linethickness{0.3mm}
\multiput(145,25)(0.12,0.12){125}{\line(1,0){0.12}}
\linethickness{0.3mm}
\put(160,50){\line(0,1){10}}
\linethickness{0.3mm}
\put(125,65){\line(1,0){30}}
\linethickness{0.3mm}
\put(125,45){\line(1,0){30}}
\linethickness{0.3mm}
\multiput(125,65)(0.18,-0.12){167}{\line(1,0){0.18}}
\put(50,85){\makebox(0,0)[cc]{r}}

\put(140,85){\makebox(0,0)[cc]{r}}

\put(140,25){\makebox(0,0)[cc]{b}}

\put(50,25){\makebox(0,0)[cc]{b}}

\put(175,70){\makebox(0,0)[cc]{$S_B$}}

\put(50,95){\makebox(0,0)[cc]{mr=1}}

\put(50,15){\makebox(0,0)[cc]{$level_b=1$}}

\put(140,95){\makebox(0,0)[cc]{mr=1}}

\put(140,15){\makebox(0,0)[cc]{$level_b=1$}}

\put(70,80){\makebox(0,0)[cc]{0,75}}

\put(30,80){\makebox(0,0)[cc]{0,75}}

\put(70,30){\makebox(0,0)[cc]{0,75}}

\put(30,30){\makebox(0,0)[cc]{0,75}}

\put(50,40){\makebox(0,0)[cc]{1}}

\put(50,70){\makebox(0,0)[cc]{1}}

\put(50,60){\makebox(0,0)[cc]{0,8}}

\put(25,55){\makebox(0,0)[cc]{0,4}}

\put(75,55){\makebox(0,0)[cc]{0,3}}

\linethickness{0.3mm}
\qbezier(15,45)(15.38,45.46)(20.28,49.33)
\qbezier(20.28,49.33)(25.17,53.2)(30,55)
\qbezier(30,55)(33.69,55.98)(37.41,55.56)
\qbezier(37.41,55.56)(41.13,55.14)(45,55)
\qbezier(45,55)(51.45,55.23)(57.65,55.93)
\qbezier(57.65,55.93)(63.85,56.63)(70,55)
\qbezier(70,55)(74.83,53.2)(79.72,49.33)
\qbezier(79.72,49.33)(84.62,45.46)(85,45)
\put(120,80){\makebox(0,0)[cc]{0,25}}

\put(140,70){\makebox(0,0)[cc]{0,25}}

\put(160,80){\makebox(0,0)[cc]{0,75}}

\put(165,55){\makebox(0,0)[cc]{1}}

\put(145,55){\makebox(0,0)[cc]{0,5}}

\put(140,40){\makebox(0,0)[cc]{1}}

\put(110,55){\makebox(0,0)[cc]{0,25}}

\put(160,30){\makebox(0,0)[cc]{0,75}}

\put(120,30){\makebox(0,0)[cc]{0,5}}

\linethickness{0.3mm}
\qbezier(105,50)(105.87,50.46)(115.97,54.35)
\qbezier(115.97,54.35)(126.07,58.23)(135,60)
\qbezier(135,60)(137.59,60.19)(140.19,59.54)
\qbezier(140.19,59.54)(142.8,58.88)(145,60)
\qbezier(145,60)(147.18,61.68)(147.69,64.73)
\qbezier(147.69,64.73)(148.21,67.78)(150,70)
\qbezier(150,70)(152.05,72.01)(154.61,73.19)
\qbezier(154.61,73.19)(157.17,74.36)(160,75)
\qbezier(160,75)(162.5,75.52)(164.96,75.3)
\qbezier(164.96,75.3)(167.42,75.07)(170,75)
\put(15,40){\makebox(0,0)[cc]{$S_B$}}

\end{picture}
  \par\end{centering}
 \caption{Examples of containment areas for reliability spanning tree construction.}
\label{fig:ExReliability}
\end{figure}

Given a maximizable metric $\mathcal{M}=(M,W,mr,met,\prec)$, the aim of this work is to construct a maximum metric tree with respect to $\mathcal{M}$ which spans the system in a self-stabilizing way in a system subject to permanent Byzantine failures. It is obvious that these Byzantine processes may disturb some correct processes. It is why, we relax the problem in the following way: we want to construct a maximum metric forest with respect to $\mathcal{M}$. The root of any tree of this forest must be either the real root or a Byzantine process. 

Each process $v$ has three O-variables: a pointer to its parent in its tree ($prnt_v\in N_v\cup\{\bot\}$), a level which stores its current metric value ($level_v\in M$), and a variable which stores its distance to the root of its tree ($dist_v\in\{0,\ldots,D\}$). Obviously, Byzantine process may disturb (at least) their neighbors. We use the following specification of the problem.

We introduce new notations as follows. Given an assigned metric $(M,W,met,mr,\prec,wf)$ over the system $S$ and two processes $u$ and $v$, we denote by $\mu(u,v)$ the maximum metric of node $u$ when $v$ plays the role of the root of the system and by $w_{u,v}$ the weight of the edge $\{u,v\}$ (that is, the value of $wf(\{u,v\})$).

\begin{definition}[$\mathcal{M}$-path]
Given an assigned metric $\mathcal{M}=(M,W,mr,met,\prec,wf)$ over a system $S$, a path $(v_0,\ldots,v_k)$ ($k\geq 1$) of $S$ is a \emph{$\mathcal{M}$-path} if and only if:
\begin{enumerate}
\item $prnt_{v_0}=\bot$, $level_{v_0}=0$, $dist_{v_0}=0$, and $v_0\in B\cup\{r\}$,
\item $\forall i\in\{1,\ldots,k\}, prnt_{v_i}=v_{i-1}$, $level_{v_i}=met(level_{v_{i-1}},w_{v_i,v_{i-1}})$, and $dist_{v_i}=i$,
\item $\forall i\in\{1,\ldots,k\}, met(level_{v_{i-1}},w_{v_i,v_{i-1}})=\underset{u\in N_v}{max_\prec}\{met(level_{u},w_{v_i,u})\}$, and
\item $level_{v_{k}}=\mu(v_k,v_0)$.
\end{enumerate}
\end{definition}

We define the specification predicate $spec(v)$ of the maximum metric tree construction with respect to a maximizable metric $\mathcal{M}$ as follows.
\[spec(v) : \begin{cases}
 prnt_v = \bot,  level_v = 0 \text{, and } dist_v=0 \text{ if } v \text{ is the root } r \\
 \text{there exists a }\mathcal{M}\text{-path } (v_0,\ldots,v_k) \text{ such that } v_k=v \text{ otherwise}
\end{cases}\]

Following discussion of Section \ref{sec:stab}, it is obvious that there exists no strictly stabilizing protocol for this problem. It is why we consider the weaker notion of topology-aware strict stabilization. First, we show an impossibility result in order to define the best possible containment area. Then, we provide a maximum metric tree construction protocol which is $(S_{B},f)$-TA-strictly stabilizing where $f\leq n-1$ which match these optimal containment area, namely:

\[S_{B}=\left\{v\in V\setminus B\left|\mu(v,r)\preceq max_\prec\{\mu(v,b),b\in B\}\right.\right\}\setminus\{r\}\]

Figures from \ref{fig:ExSP} to \ref{fig:ExReliability} provide some examples of containment areas with respect to several maximizable metrics.

We introduce here a new definition that is used in the following.

\begin{definition}[Fixed point]
A metric value $m$ is a \emph{fixed point} of a metric $\mathcal{M}=(M,W,mr,met,$ $\prec)$ if $m\in M$ and if for any value $w\in W$, we have: $met(m,w)=m$.
\end{definition}

\subsection{Impossibility Result}

In this section, we show that there exists some constraints on the containment area of any topology-aware strictly stabilizing for the maximum metric tree construction depending on the metric.

\begin{theorem}\label{th:impTAstrict}
Given a maximizable metric $\mathcal{M}=(M,W,mr,met,\prec)$, even under the central daemon, there exists no $(A_B,1)$-TA-strictly stabilizing protocol for maximum metric spanning tree construction with respect to $\mathcal{M}$ where $A_B\varsubsetneq S_B$.
\end{theorem}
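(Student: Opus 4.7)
The plan is a proof by contradiction combined with a Byzantine-behaviour-switching argument. Assume $P$ is $(A_B,1)$-TA-strictly stabilizing with $A_B\varsubsetneq S_B$, pick $v\in S_B\setminus A_B$ and a Byzantine process $b\in B$ realising $\mu(v,r)\preceq\mu(v,b)=\max_\prec\{\mu(v,b'):b'\in B\}$. Since $v$ is $A_B$-correct, once the execution reaches a TA-contained configuration, the O-variables $(prnt_v,level_v,dist_v)$ of $v$ are frozen forever, and $spec(v)$ must be preserved in every subsequent configuration, no matter what $b$ does.

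I would then exhibit two Byzantine strategies for $b$, each applied to an initial configuration that is already a global fixed point of $P$. Under strategy $\beta_1$, $b$ perpetually advertises the state of an ideal root ($prnt_b=\bot$, $level_b=mr$, $dist_b=0$) and the system starts from a maximum metric tree rooted at $b$ (which exists since $\mathcal{M}$ is maximizable and $b$ syntactically acts as a root); this configuration already satisfies $spec$ at every correct process, nothing moves, and $v$ is therefore locked at a state whose associated $\mathcal{M}$-path ends at $b$ with $level_v=\mu(v,b)$. Under strategy $\beta_2$, $b$ perpetually mimics a correct non-root sitting on a maximum metric path to $r$, starting from a maximum metric tree rooted at $r$; here $b$ violates condition~1 of the $\mathcal{M}$-path definition, so $v$'s only admissible $\mathcal{M}$-path ends at $r$ and $v$ is locked with $level_v=\mu(v,r)$.

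Now I splice the two scenarios: starting from the TA-contained configuration produced under $\beta_1$, switch $b$'s behaviour to $\beta_2$. The contained-configuration definition forbids $v$ from touching any O-variable, yet $spec(v)$ still has to hold once the rest of the system re-adjusts. In the strict case $\mu(v,r)\prec\mu(v,b)$ the contradiction is immediate: the frozen value $level_v=\mu(v,b)$ cannot equal the $\mu(v,r)$ demanded by condition~4 of any $\mathcal{M}$-path ending at $r$. The main obstacle I expect is the boundary case $\mu(v,r)=\mu(v,b)$, where $level_v$ alone no longer separates the two scenarios; there I would instead exploit the frozen $prnt_v$, which in $\beta_1$ points into the subtree rooted at $b$, together with $\mathcal{M}$-path conditions~1 and~2, observing that the trunk from $v$ down to $b$ becomes structurally impossible to close at a legitimate root once $b$ ceases to satisfy condition~1. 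To make this part go through I would choose the $\beta_1$ initial configuration adversarially so that every intermediate node on the chosen $v$-to-$b$ path is itself $A_B$-correct, and hence frozen with a $prnt$ pointing toward $b$; this is where the maximality of $\mu(v,b)$ among $\{\mu(v,b'):b'\in B\}$ and the monotonicity/boundedness characterisation of maximisable metrics come in, as they propagate the inequality $\mu(\cdot,r)\preceq\mu(\cdot,b)$ along the chosen path so that every intermediate node of that path indeed lies in $S_B$.
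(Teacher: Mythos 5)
Your overall strategy (freeze $v$'s O-variables in a TA-contained configuration reached under one Byzantine behaviour, then switch $b$ to another behaviour and show $spec(v)$ cannot survive) is genuinely different from the paper's, which instead runs a case analysis on the structure of $\mathcal{M}$, builds explicit four- or five-process networks, and derives the contradiction by letting $b$ simulate a correct process so that the convergence obligation of fault-free self-stabilization forces the processes of $S_B$ to modify their O-variables. In the strict regime $\mu(v,r)\prec\mu(v,b)$ your argument is essentially sound, at least when $v$ is adjacent to $b$ (this is the paper's Case 2.2, and you in fact make explicit the behaviour-switching step the paper leaves implicit): conditions 2--4 of the $\mathcal{M}$-path definition force $level_v\succeq met(mr,w_{v,b})\succ\mu(v,r)$ under $\beta_1$, hence $level_v=\mu(v,b)$, and this frozen value is incompatible with condition 4 of any $\mathcal{M}$-path ending at $r$ under $\beta_2$.

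The genuine gap is the boundary case $\mu(v,r)=\mu(v,b)$, which you flag but do not close, and which is not a corner case: it is exactly the situation of the paper's Case 1 ($|M|=1$) and Case 2.1 ($met(mr,w)$ a fixed point of $\mathcal{M}$), i.e., all metrics for which level values cannot separate ``attached to $r$'' from ``attached to $b$''. Your fallback assumes that under $\beta_1$ the frozen $prnt_v$ must point into the subtree of $b$, but nothing forces this: when $\mu(v,r)=\mu(v,b)$, a state in which $v$'s parent lies on a path to $r$ satisfies conditions 3 and 4 with equality even while $b$ advertises $mr$, so the protocol may legitimately freeze $v$ on the $r$ side, and your switch to $\beta_2$ then produces no violation. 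Moreover, ``choosing the $\beta_1$ initial configuration adversarially so that every intermediate node is $A_B$-correct'' is not a move you can make: membership in $A_B$ is fixed by the choice of $A_B\varsubsetneq S_B$ (which may be, say, $S_B\setminus\{v\}$), and intermediate processes that are not $A_B$-correct are unconstrained by containment and free to move away from whatever initial state you assign, so the propagation of the pull toward $b$ up to a node $v$ not adjacent to $b$ is not secured. The paper sidesteps both difficulties by fixing concrete networks in which the nodes of $S_B$ are adjacent to $b$, and by extracting the contradiction from the $prnt$ and $dist$ variables (which must change when $b$ behaves correctly and the fault-free system is obliged to converge) rather than from $level_v$ alone.
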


\begin{proof}
Let $\mathcal{M}=(M,W,mr,met,\prec)$ be a maximizable metric and $\mathcal{P}$ be a $(A_B,1)$-TA-strictly stabilizing protocol for maximum metric spanning tree construction protocol with respect to $\mathcal{M}$ where $A_B\varsubsetneq S_B$. We must distinguish the following cases:

\begin{description}
\item[Case 1:] $|M|=1$.\\
Denote by $m$ the metric value such that $M=\{m\}$. For any system and for any process $v\neq r$, we have $\mu(v,r)=\underset{b\in B}{min_\prec}\{\mu(v,b)\}=m$. Consequently, $S_B=V\setminus(B\cup\{r\})$ for any system.

Consider the following system: $V=\{r,u,v,b\}$ and $E=\{\{r,u\},\{u,v\},\{v,b\}\}$ ($b$ is a Byzantine process). As $S_B=\{u,v\}$ and $A_B\varsubsetneq S_B$, we have: $u\notin A_B$ or $v\notin A_B$. Consider now the following configuration $\rho_0^0$: $prnt_r=prnt_b=\bot$, $prnt_v=b$, $prnt_u=v$, $level_r=level_u=level_v=level_b=m$, $dist_r=dist_b=0$, $dist_v=1$ and $dist_u=2$ (see Figure \ref{fig:impTAstrict}, other variables may have arbitrary values). Note that $\rho_0^0$ is $A_B$-legitimate for $spec$ (whatever $A_B$ is).

Assume now that $b$ behaves as a correct process with respect to $\mathcal{P}$. Then, by convergence of $\mathcal{P}$ in a fault-free system starting from $\rho_0^0$ which is not legitimate (remember that a strictly-stabilizing protocol is a special case of self-stabilizing protocol), we can deduce that the system reaches in a finite time a configuration $\rho_1^0$ (see Figure \ref{fig:impTAstrict}) in which: $prnt_r=\bot$, $prnt_u=r$, $prnt_v=u$, $prnt_b=v$, $level_r=level_u=level_v=level_b=m$, $dist_r=0$, $dist_u=1$, $dist_v=2$ and $dist_b=3$. Note that processes $u$ and $v$ modify their O-variables in this execution. This contradicts the $(A_B,1)$-TA-strict stabilization of $\mathcal{P}$ (whatever $A_B$ is).

\begin{figure}[t]
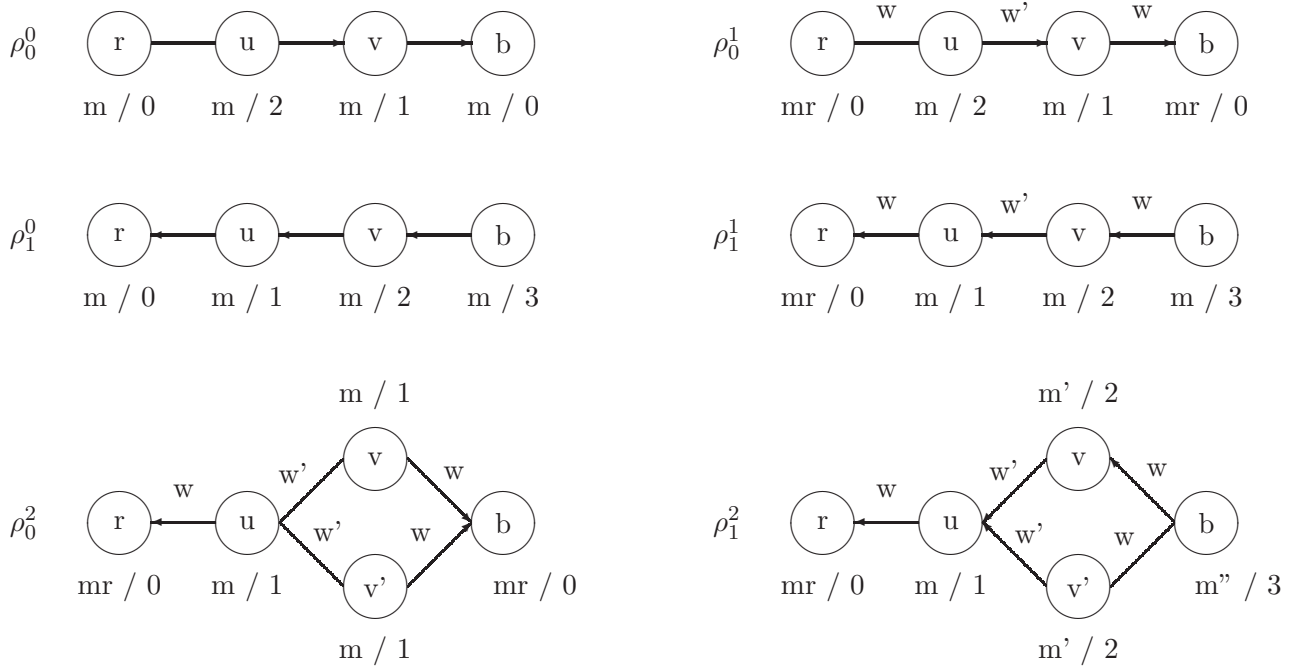

\noindent \begin{centering} \include{impTAstrict}
  \par\end{centering}
 \caption{Configurations used in proof of Theorem \ref{th:impTAstrict}.}
\label{fig:impTAstrict}
\end{figure}

\item[Case 2:] $|M|\geq 2$.\\
By definition of a bounded metric, we can deduce that there exist $m\in M$ and $w\in W$ such that $m=met(mr,w)\prec mr$. Then, we must distinguish the following cases:
\begin{description}
\item[Case 2.1:] $m$ is a fixed point of $\mathcal{M}$.\\
Consider the following system: $V=\{r,u,v,b\}$, $E=\{\{r,u\},\{u,v\},\{v,b\}\}$, $w_{r,u}=w_{v,b}=w$, and $w_{u,v}=w'$ ($b$ is a Byzantine process). As for any $w'\in W$, $met(m,w')=m$ (by definition of a fixed point), we have: $S_B=\{u,v\}$. Since $A_B\varsubsetneq S_B$, we have: $u\notin A_B$ or $v\notin A_B$. Consider now the following configuration $\rho_0^1$: $prnt_r=prnt_b=\bot$, $prnt_v=b$, $prnt_u=v$, $level_r=level_b=mr$, $level_u=level_v=m$, $dist_r=dist_b=0$, $dist_v=1$ and $dist_u=2$ (see Figure \ref{fig:impTAstrict}, other variables may have arbitrary values). Note that $\rho_0^1$ is $A_B$-legitimate for $spec$ (whatever $A_B$ is).

Assume now that $b$ behaves as a correct process with respect to $\mathcal{P}$. Then, by convergence of $\mathcal{P}$ in a fault-free system starting from $\rho_0^1$ which is not legitimate (remember that a strictly-stabilizing protocol is a special case of self-stabilizing protocol), we can deduce that the system reaches in a finite time a configuration $\rho_1^1$ (see Figure \ref{fig:impTAstrict}) in which: $prnt_r=\bot$, $prnt_u=r$, $prnt_v=u$, $prnt_b=v$, $level_r=mr$, $level_u=level_v=level_b=m$ (since $m$ is a fixed point), $dist_r=0$, $dist_u=1$, $dist_v=2$ and $dist_b=3$. Note that processes $u$ and $v$ modify their O-variables in this execution. This contradicts the $(A_B,1)$-TA-strict stabilization of $\mathcal{P}$ (whatever $A_B$ is).
\item[Case 2.2:] $m$ is not a fixed point of $\mathcal{M}$.\\
This implies that there exists $w'\in W$ such that: $met(m,w')\prec m$ (remember that $\mathcal{M}$ is bounded). Consider the following system: $V=\{r,u,v,v',b\}$, $E=\{\{r,u\},\{u,v\},\{u,v'\},$ $\{v,b\},\{v',b\}\}$, $w_{r,u}=w_{v,b}=w_{v',b}=w$, and $w_{u,v}=w_{u,v'}=w'$ ($b$ is a Byzantine process). We can see that $S_B=\{v,v'\}$. Since $A_B\varsubsetneq S_B$, we have: $v\notin A_B$ or $v'\notin A_B$. Consider now the following configuration $\rho_0^2$: $prnt_r=prnt_b=\bot$, $prnt_v=prnt_{v'}=b$, $prnt_u=r$, $level_r=level_b=mr$, $level_u=level_v=level_{v'}=m$, $dist_r=dist_b=0$, $dist_v=dist_{v'}=1$ and $dist_u=1$ (see Figure \ref{fig:impTAstrict}, other variables may have arbitrary values). Note that $\rho_0^2$ is $A_B$-legitimate for $spec$ (whatever $A_B$ is).

Assume now that $b$ behaves as a correct process with respect to $\mathcal{P}$. Then, by convergence of $\mathcal{P}$ in a fault-free system starting from $\rho_0^2$ which is not legitimate (remember that a strictly-stabilizing protocol is a special case of self-stabilizing protocol), we can deduce that the system reaches in a finite time a configuration $\rho_1^2$ (see Figure \ref{fig:impTAstrict}) in which: $prnt_r=\bot$, $prnt_u=r$, $prnt_v=prnt_{v'}=u$, $prnt_b=v$ (or $prnt_b=v'$), $level_r=mr$, $level_u=m$ $level_v=level_{v'}=met(m,w')=m'$, $level_b=met(m',w)=m''$, $dist_r=0$, $dist_u=1$, $dist_v=dist_{v'}=2$ and $dist_b=3$. Note that processes $v$ and $v'$ modify their O-variables in this execution. This contradicts the $(A_B,1)$-TA-strict stabilization of $\mathcal{P}$ (whatever $A_B$ is).
\end{description}
\end{description}
\end{proof}

\subsection{Topology-Aware Strict Stabilizing Protocol}

In this section, we provide our self-stabilizing protocol that achieve optimal containment areas to permanent Byzantine failures for constructing a maximum metric tree for any maximizable metric $\mathcal{M}=(M,W,met,mr,\prec)$. More formally, our protocol is $(S_B,f)$-strictly stabilizing, that is optimal with respect to the result of Theorem \ref{th:impTAstrict}. Our protocol is borrowed from the one of \cite{GS99c} (which is self-stabilizing). The key idea of this protocol is to use the distance variable (upper bounded by a given constant $D$) to detect and break cycles of process which has the same maximum metric. The main modification we bring to this protocol follows. In the initial protocol, when a process modifies its parent, it chooses arbitrarily one of the "better" neighbors (with respect to the metric).  To achieve the $(S_B,f)$-TA-strict stabilization, we must ensures a fair selection along the set of its neighbor. We perform this fairness with a round-robin order along the set of neighbors. Our solution is presented as Algorithm~\ref{algo:max}.

\begin{algorithm}
\caption{$\mathcal{SSMAX}$: A TA-strictly stabilizing protocol for maximum metric tree construction.}\label{algo:max}
\scriptsize
\begin{description}
\item{Data:}~\\
$N_v$: totally ordered set of neighbors of $v$.\\
$D$: upper bound of the number of processes in a simple path.
\item{Variables:}~\\
$prnt_v\begin{cases}=\bot \text{ if } v=r\\\in N_v \text{ if } v\neq r\end{cases}$: pointer on the parent of $v$ in the tree.\\
$level_v\in\{m\in M|m\preceq mr\}$: metric of the node.\\
$dist_v\in\{0,\ldots,D\}$: distance to the root.
\item{Macro:}~\\
For any subset $A\subseteq N_v$, $choose(A)$ returns the first element of $A$ which is bigger than $prnt_v$ (in a round-robin fashion).
\item{Rules:}~\\
$\boldsymbol{(R_r)}::(v=r)\wedge((level_v\neq mr)\vee(dist_v\neq 0))\longrightarrow level_v:=mr;~dist_v:=0$

$\boldsymbol{(R_1)}::(v\neq r)\wedge(prnt_v\in N_v)\wedge((dist_v\neq min(dist_{prnt_v}+1,D))\vee(level_v\neq met(level_{prnt_v},w_{v,prnt_v})))$\\ $~~~~~~~~~~~~~\longrightarrow dist_v:=min(dist_{prnt_v}+1,D);level_v:=met(level_{prnt_v},w_{v,prnt_v})$

$\boldsymbol{(R_2)}::(v\neq r)\wedge(dist_v=D)\wedge(\exists u \in N_v, dist_u<D-1)$\\$~~~~~~~~~~~~~\longrightarrow prnt_v:=choose(\{u\in N_v|dist_v<D-1\});~dist_v:=dist_{prnt_v}+1;~level_v:=met(level_{prnt_v},w_{v,prnt_v})$

$\boldsymbol{(R_3)}::(v\neq r)\wedge(\exists u\in N_v,(dist_u<D-1)\wedge(level_v\prec met(level_u,w_{u,v})))$\\
$~~~~~~~~~~~~~\longrightarrow prnt_v:=choose\Bigg(\Bigg\{u\in N_v\Big|(level_u<D-1)\wedge(met(level_u,w_{u,v})=\underset{q\in N_v/level_q<D-1}{max_\prec}\{met(level_q,w_{q,v})\})\Bigg\}\Bigg);$\\
$~~~~~~~~~~~~~~~~~level_v:=met(level_{prnt_v},w_{prnt_v,v});~dist_v:=dist_{prnt_v}+1$
\end{description}
\end{algorithm}
\normalsize

In the following, we provide the proof of the TA-strict stabilization of $\mathcal{SSMAX}$. Remember that the real root $r$ can not be a Byzantine process by hypothesis. Note that the subsystem whose set of nodes is $V\setminus S_B$ is connected respectively by boundedness of the metric.

\begin{lemma}\label{lem:goodProperty}
For any process $v\in V$, we have:
\[\forall u\in N_v,met\left(\underset{p\in B\cup\{r\}}{max_\prec}\{\mu(u,p)\},w_{u,v}\right)\preceq\underset{p\in B\cup\{r\}}{max_\prec}\{\mu(v,p)\}\]
\end{lemma}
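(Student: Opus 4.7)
The plan is to fix $v\in V$ and $u\in N_v$, and choose $p^\ast\in B\cup\{r\}$ realizing the left-hand maximum, i.e.\ $\mu(u,p^\ast)=\max_\prec\{\mu(u,p):p\in B\cup\{r\}\}$. Since $p^\ast$ is itself one of the candidates appearing on the right-hand side of the claimed inequality, it suffices to prove the stronger assertion
\[
met(\mu(u,p^\ast),w_{u,v})\ \preceq\ \mu(v,p^\ast).
\]

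To establish this, I would fix a maximum metric path $P=(p^\ast=v_0,v_1,\ldots,v_k=u)$ witnessing $\mu(u,p^\ast)$, and split on whether $v$ lies on $P$. If $v$ does not appear on $P$, then $P'=(v_0,\ldots,v_k,v)$ remains a simple path from $p^\ast$ to $v$, and by the definition of the metric of a rooted path its metric value is exactly $met(level_{v_k},w_{u,v})=met(\mu(u,p^\ast),w_{u,v})$. Since $\mu(v,p^\ast)$ is defined as the maximum over all rooted paths from $v$ (with $p^\ast$ as the root), the desired inequality follows. The degenerate subcase $u=p^\ast$ (so $k=0$) fits here: appending $v$ to the trivial path $(u)$ yields the edge $(u,v)$, of metric $met(mr,w_{u,v})$.

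If instead $v=v_i$ for some $i\in\{0,\ldots,k\}$, then the prefix $(v_0,\ldots,v_i)$ is a simple path from $p^\ast$ to $v$ with metric $level_{v_i}$, so $\mu(v,p^\ast)\succeq level_{v_i}$. Because $\mathcal{M}$ is maximizable it is bounded by the characterization theorem, so iterating $met(m,w)\preceq m$ along the tail $v_i,v_{i+1},\ldots,v_k$ of $P$ gives $\mu(u,p^\ast)=level_{v_k}\preceq level_{v_i}$; one further application of boundedness then yields $met(\mu(u,p^\ast),w_{u,v})\preceq\mu(u,p^\ast)\preceq level_{v_i}\preceq\mu(v,p^\ast)$. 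The special case $v=p^\ast=v_0$ folds into this subcase, where $\mu(v,p^\ast)=mr$ and the inequality reduces to a single use of boundedness.

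I do not anticipate a significant obstacle: the argument uses only boundedness (monotonicity is not needed) together with the definition of $\mu$ as a maximum over simple paths. The one subtlety is precisely the case split on whether $v$ already lies on the chosen maximum metric path from $u$ to $p^\ast$, which is what prevents the naive one-line ``extend $P$ by the edge $(u,v)$'' argument and forces the boundedness-based detour used in the second case.
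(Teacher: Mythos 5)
Your proof is correct and follows essentially the same route as the paper: both reduce the lemma to the single inequality $met(\mu(u,q),w_{u,v})\preceq\mu(v,q)$ for a process $q\in B\cup\{r\}$ realizing the maximum at $u$, and then conclude because $q$ is itself one of the candidates on the right-hand side (the paper phrases this as a contradiction, you argue directly, which is immaterial). The only substantive difference is that the paper asserts that key inequality without justification, whereas you actually prove it --- your case split on whether $v$ already lies on the witnessing maximum metric path, with the boundedness argument in the second case, fills in a detail the paper glosses over.
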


\begin{proof}
Let $v\in V$ be a process. By contradiction, assume that there exists a neighbor $u$ of $v$ such that:
\[\underset{p\in B\cup\{r\}}{max_\prec}\{\mu(v,p)\}\prec met\left(\underset{p\in B\cup\{r\}}{max_\prec}\{\mu(u,p)\},w_{u,v}\right)\]
Let $q\in B\cup\{r\}$ one of the process such that $\underset{p\in B\cup\{r\}}{max_\prec}\{\mu(u,p)\}=\mu(u,q)$. Then, we have:
\[\begin{array}{rcll}
\underset{p\in B\cup\{r\}}{max_\prec}\{\mu(v,p)\}&\prec&met(\mu(u,q),w_{u,v})&\text{ by construction of }q\\
&\prec&\mu(v,q)&\text{ since } met(\mu(u,q),w_{u,v})\preceq \mu(v,q)\\
\end{array}\]
This contradicts the fact that $q\in B\cup\{r\}$ and shows us the result.
\end{proof}

Given a configuration $\rho\in C$ and a metric value $m\in M$, let us define the following predicate: 
\[IM_m(\rho)\equiv \forall v\in V,level_v\preceq max_\prec\left\{m,\underset{u\in B\cup\{r\}}{max_\prec}\{\mu(v,u)\}\right\}\]

\begin{lemma}\label{lem:Imclosed}
For any metric value $m\in M$, the predicate $IM_m$ is closed by actions of $\mathcal{SSMAX}$.
\end{lemma}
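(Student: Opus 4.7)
My plan is to fix an arbitrary single-step transition $\rho\stackrel{R}{\mapsto}\rho'$ of $\mathcal{SSMAX}$ under the hypothesis $IM_m(\rho)$, abbreviate the right-hand side of the predicate by $M(v):=max_\prec\{m,\ \underset{p\in B\cup\{r\}}{max_\prec}\{\mu(v,p)\}\}$, and verify $level_v'\preceq M(v)$ for every $v\in V$. Three sub-cases are immediate and I would dispose of them first: a process outside $R$ keeps its state and inherits its bound from $\rho$; a Byzantine process $b$ has $\mu(b,b)=mr$, so $M(b)\succeq mr$ and the admissibility constraint $level_b'\preceq mr$ ensures the bound regardless of what the adversary writes; and the root can only fire $R_r$, which sets $level_r'=mr=\mu(r,r)\preceq M(r)$.

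The substantive case is a correct non-root $v\in R$ executing $R_1$, $R_2$, or $R_3$. I would observe that in all three rules the new value takes the form $level_v'=met(level_u,w_{v,u})$ for some neighbor $u$ of $v$: the current $prnt_v$ in $R_1$, the neighbor returned by $choose$ in $R_2$ and $R_3$. Atomicity of the guarded-command execution guarantees that the $level_u$ used is its value at $\rho$, so $IM_m(\rho)$ applied at $u$ yields $level_u\preceq M(u)$, and monotonicity of $\mathcal{M}$ then promotes this to $level_v'\preceq met(M(u),w_{v,u})$.

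The remaining task is to establish $met(M(u),w_{v,u})\preceq M(v)$, which I would handle by splitting on which term realises the maximum defining $M(u)$. If $M(u)=m$, boundedness of $\mathcal{M}$ gives $met(m,w_{v,u})\preceq m\preceq M(v)$. Otherwise $M(u)=\underset{p\in B\cup\{r\}}{max_\prec}\{\mu(u,p)\}$, and Lemma~\ref{lem:goodProperty} applied on the edge $\{u,v\}$ delivers $met(M(u),w_{v,u})\preceq\underset{p\in B\cup\{r\}}{max_\prec}\{\mu(v,p)\}\preceq M(v)$ directly. Either branch closes the argument for $v$, and since concurrent actions by other processes touch only their own variables using reads from $\rho$, processing each mover independently yields $IM_m(\rho')$.

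The only step that deserves attention is this final case split: Lemma~\ref{lem:goodProperty} is precisely the Byzantine-aware analogue of boundedness needed in the regime where the $\mu$-maximum dominates $m$. Boundedness alone would not close that branch, and monotonicity is what glues the two regimes together. I anticipate no further obstacle beyond recognising this complementarity.
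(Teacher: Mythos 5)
Your proof is correct and follows essentially the same route as the paper's: reduce to a correct non-root mover whose new level is $met(level_u,w_{v,u})$ for a neighbor $u$ read at $\rho$, push the inductive bound through $met$ by monotonicity, and then dispatch the two branches of the max with boundedness and Lemma~\ref{lem:goodProperty} respectively. The only (cosmetic) difference is that you split on which term realises the maximum while the paper distributes $met$ over the maximum, and you correctly attribute the first inequality to monotonicity where the paper's text loosely calls it boundedness.
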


\begin{proof}
Let $m$ be a metric value ($m\in M$). Let $\rho\in C$ be a configuration such that $IM_m(\rho)=true$ and $\rho'\in C$ be a configuration such that $\rho \stackrel{R}{\mapsto} \rho'$ is a step of $\mathcal{SSMAX}$.

If the root process $r\in R$ (respectively a Byzantine process $b\in R$), then we have $level_r=mr$ (respectively $level_b\preceq mr$) in $\rho'$ by construction of $\boldsymbol{(R_r)}$ (respectively by definition of $level_b$). Hence, $level_r\preceq max_\prec\left\{m, \underset{u\in B\cup\{r\}}{max_\prec}\{\mu(r,u)\}\right\}=mr$ (respectively $level_b\preceq max_\prec\left\{m,\underset{u\in B\cup\{r\}}{max_\prec}\{\mu(b,u)\}\right\}\preceq mr$).

If a correct process $v\in R$ with $v\neq r$, then there exists a neighbor $p$ of $v$ such that $level_p\preceq max_\prec\left\{m,\underset{u\in B\cup\{r\}}{max_\prec}\{\mu(p,u)\}\right\}$ in $\rho$ (since $IM(\rho)=true$) and $prnt_v=p$ and $level_v=met(level_p,$ $w_{v,p})$ in $\rho'$ (since $v$ is activated during this step).

If we apply the Lemma \ref{lem:goodProperty} to $met$ and to neighbor $p$, we obtain the following property:
\[met\left(\underset{u\in B\cup\{r\}}{max_\prec}\{\mu(p,u)\},w_{v,p}\right)\preceq\underset{u\in B\cup\{r\}}{max_\prec}\{\mu(v,u)\}\]  

Consequently, we obtain that, in $\rho'$:
\[\begin{array}{rcll}
level_v & = & met(level_p,w_{v,p})&\\
& \preceq & met\left(max_\prec\left\{m,\underset{u\in B\cup\{r\}}{max_\prec}\{\mu(p,u)\}\right\},w_{v,p}\right)&\text{ by boundedness of }\mathcal{M}\\
& \preceq & max_\prec\left\{met(m,w_{v,p}),met\left(\underset{u\in B\cup\{r\}}{max_\prec}\{\mu(p,u)\},w_{v,p}\right) \right\}&\\
& \preceq & max_\prec\left\{m,\underset{u\in B\cup\{r\}}{max_\prec}\{\mu(v,u)\}\right\}&\text{ since } met(m,w_{v,p})\preceq m
\end{array}\]

We can deduce that $IM_d(\rho')=true$, that concludes the proof.
\end{proof}

Given an assigned metric to a system $G$, we can observe that the set of metrics value $M$ is finite and that we can label elements of $M$ by $m_0=mr,m_1,\ldots,m_k$ in a way such that $\forall i\in\{0,\ldots,k-1\},m_{i+1}\prec m_i$.

We introduce the following notations:
\[\begin{array}{rrcl}
\forall m_i\in M, & P_{m_i} & = & \big\{v\in V\setminus S_B\big| \mu(v,r)=m_i\big\}\\
\forall m_i\in M, & V_{m_i} & = & \underset{j=0}{\overset{i}{\bigcup}}P_{m_j}\\
\forall m_i\in M, & I_{m_i} & = & \big\{v\in V\big|\underset{u\in B\cup\{r\}}{max_\prec}\{\mu(v,u)\}\prec m_i \big\}\\
\forall m_i\in M, & \mathcal{LC}_{m_i} & = & \big\{\rho\in\mathcal{C}\big|(\forall v\in V_{m_i}, spec(v))\wedge(IM_{m_i}(\rho))\big\}\\
 & \mathcal{LC} & = & \mathcal{LC}_{m_k}
\end{array}\]

\begin{lemma}\label{lem:LCmiclosed}
For any $m_i\in M$, the set $\mathcal{LC}_{m_i}$ is closed by actions of $\mathcal{SSMAX}$.
\end{lemma}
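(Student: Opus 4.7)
The plan is to handle the two conjuncts in the definition of $\mathcal{LC}_{m_i}$ separately under any step $\rho \stackrel{R}{\mapsto} \rho'$. Closure of the invariant $IM_{m_i}$ is immediate from Lemma~\ref{lem:Imclosed}, so the only real work is to show that $spec(v)$ continues to hold in $\rho'$ for every $v \in V_{m_i}$.

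Fix a correct $v \in V_{m_i}$ different from $r$; the root case is trivial because $R_r$ is the sole rule applicable to $r$ and is disabled as soon as $spec(r)$ holds. Let $(v_0,\ldots,v_k = v)$ be the $\mathcal{M}$-path witnessing $spec(v)$ in $\rho$. I plan to show that every node on this path is correct and lies in $V_{m_i}$, and that all four guards $R_r,R_1,R_2,R_3$ are disabled at each such node in $\rho$; consequently no node on the path moves during the step, the path survives unchanged in $\rho'$, and $spec(v)$ still holds there. The guard-disability analysis is a direct case check against the $\mathcal{M}$-path conditions: $R_r$ would demand wrong root values, contradicting condition~1; $R_1$ is ruled out by the equalities of conditions~1 and~2; $R_2$ would require $dist_{v_\ell} = D$, impossible since the simple path gives $dist_{v_\ell} = \ell \leq k \leq D-1$; and $R_3$ would require a neighbor strictly improving $level_{v_\ell}$, contradicting the local maximality of condition~3.

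The crux is proving that the whole path sits in $V_{m_i}$, so that in particular $v_0$ is not Byzantine. My strategy is an induction along a global maximum-metric path $r = u_0,u_1,\ldots,u_t = v$ from $r$ to $v$ showing that $level_{u_\ell} = \mu(u_\ell,r)$ in $\rho$: condition~3 applied to $spec(u_\ell)$ together with $level_{u_{\ell-1}} = \mu(u_{\ell-1},r)$ forces $level_{u_\ell} \succeq met(\mu(u_{\ell-1},r),w_{u_{\ell-1},u_\ell}) = \mu(u_\ell,r)$, while $IM_{m_i}(\rho)$ combined with $u_\ell \notin S_B$ forces $level_{u_\ell} \preceq \max_\prec\{m_i,\mu(u_\ell,r)\} = \mu(u_\ell,r)$. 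The induction is applicable provided each $u_\ell$ actually lies in $V_{m_i}$, which follows from the monotonicity of $\mathcal{M}$ and the definition of $S_B$ invoked before Lemma~\ref{lem:goodProperty}: a Byzantine or $S_B$-vertex on an $r$-maximum-metric path toward $v \notin S_B$ would, by substituting its best Byzantine-reachable metric (or $mr$ in the Byzantine case) and applying monotonicity along the remaining edge weights, yield $\mu(v,b) \succeq \mu(v,r)$ for some $b \in B$, contradicting $v \notin S_B$. Applied to $v$ this pinches $level_v = \mu(v,r)$; since $v \notin S_B$ gives $\mu(v,b) \prec \mu(v,r)$ for every $b \in B$, condition~4 of the $\mathcal{M}$-path rules out $v_0 \in B$, forcing $v_0 = r$ and, by the same argument applied at each prefix, every intermediate $v_\ell$ to be correct and in $V_{m_i}$.

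The main obstacle is exactly this inductive argument: it must extract $level_v = \mu(v,r)$ in spite of the formal freedom in $spec(v)$ to admit a Byzantine-rooted $\mathcal{M}$-path, and it relies on the topological observation that $r$-maximum-metric paths to any vertex outside $S_B$ avoid both $S_B$ and the Byzantines. Once these topological and inductive ingredients are in place, the rule-disability case analysis and the appeal to Lemma~\ref{lem:Imclosed} combine to yield $\rho' \in \mathcal{LC}_{m_i}$, completing the closure proof.
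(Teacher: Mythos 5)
Your treatment of the first conjunct via Lemma~\ref{lem:Imclosed} matches the paper, and your care in pinning $level_v=\mu(v,r)$ for $v\in V_{m_i}$ --- ruling out a Byzantine-rooted witnessing $\mathcal{M}$-path by combining $IM_{m_i}$, $v\notin S_B$, and the lower bound coming from condition~3 --- is actually more explicit than the paper, which simply asserts this equality. The genuine gap is in your main step: ``no node on the path moves during the step, the path survives unchanged in $\rho'$, and $spec(v)$ still holds there'' does not follow. Condition~3 of an $\mathcal{M}$-path is a local-maximality condition quantified over \emph{all} neighbors of each $v_\ell$, including neighbors that lie in $S_B$ or are Byzantine and are therefore free to act during $\rho\mapsto\rho'$. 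Such a neighbor $q$ could raise $level_q$ so that $met(level_q,w_{v_\ell,q})$ strictly exceeds $met(level_{v_{\ell-1}},w_{v_\ell,v_{\ell-1}})$, destroying the witness for $spec(v)$ (and, one step later, enabling $\boldsymbol{(R_3)}$ at $v_\ell$) even though no path node was activated. Your guard-disability check is carried out only in $\rho$, so it says nothing about this.

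This is exactly the point where the paper invokes Lemma~\ref{lem:goodProperty}, which your proposal never uses. Combining the closure of $IM_{m_i}$ (so every neighbor $u$ always satisfies $level_u\preceq \max_\prec\{m_i,\max_\prec\{\mu(u,p) : p\in B\cup\{r\}\}\}$) with boundedness, monotonicity and Lemma~\ref{lem:goodProperty} yields $met(level_u,w_{u,v})\preceq \max_\prec\{m_i,\max_\prec\{\mu(v,p) : p\in B\cup\{r\}\}\}=level_v$ for every neighbor $u$ of every $v\in V_{m_i}$, in every reachable configuration. That single inequality is what keeps condition~3 intact and keeps the nodes of $V_{m_i}$ permanently disabled; without it your argument only shows that $\rho'$ agrees with $\rho$ on the O-variables of the path nodes, not that $\rho'\in\mathcal{LC}_{m_i}$. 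Adding this neighbor bound would close the hole and bring your proof essentially in line with the paper's.
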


\begin{proof}
Let $m_i$ be a metric value from $M$ and $\rho$ be a configuration of $\mathcal{LC}_{m_i}$. By construction, any process $v\in V_{m_i}$ satisfies $spec(v)$ in $\rho$. 

In particular, the root process satisfies: $prnt_r=\bot$, $level_r=mr$, and $dist_r=0$. By construction of $\mathcal{SSMAX}$, $r$ is not enabled and then never modifies its O-variables (since the guard of the rule of $r$ does not involve the state of its neighbors). 

In the same way, any process $v\in V_{m_i}$ satisfies: $prnt_v\in N_v$, $level_v=met(level_{prnt_v},$ $w_{prnt_v,v})$, $dist_v=dist_{prnt_v}+1$, and $level_v=\underset{u\in N_v}{max_\prec}\{met(level_u,w_{u,v})\}$. Note that, as $v\in V_{m_i}$ and $spec(v)$ holds in $\rho$, we have: $level_v=\mu(v,r)=\underset{p\in B\cup\{r\}}{max_\prec}\{\mu(v,p)\}$ and $dist_v\leq D-1$ by construction of $D$. Hence, process $v$ is not enabled in $\rho$.

Assume that there exists a process $v\in V_{m_i}$ that take a step $\rho' \stackrel{R}{\mapsto} \rho''$ in an execution starting from $\rho$ (without loss of generality, assume that $v$ is the first process of $v\in V_{m_i}$ that takes a step in this execution). Then, we know that $v\neq r$. This activation implies that a neighbor $u\notin V_{m_i}$ (since $v$ is the first process of $V_{m_i}$ to take a step) of $v$ modified its $level_u$ variable to a metric value $m\in M$ such that $level_v\prec met(m,w_{u,v})$ in $\rho'$ (note that O-variables of $v$ and $prnt_v$ remain consistent since $v$ is the first process to take a step in this execution).

Hence, we have $level_v=\underset{p\in B\cup\{r\}}{max_\prec}\{\mu(v,p)\}\prec met(m,w_{u,v})$. Moreover, the closure of $IM_B$ (established in Lemma \ref{lem:Imclosed}) ensures us that $m\preceq \underset{p\in B\cup\{r\}}{max_\prec}\{\mu(u,p)\}$. By boundedness of $\mathcal{M}$, we can deduce that $met(m,w_{u,v})\preceq met(\underset{p\in B\cup\{r\}}{max_\prec}\{\mu(u,p)\},w_{u,v})$. Consequently, we obtain that $\underset{p\in B\cup\{r\}}{max_\prec}\{\mu(v,p)\}\prec met(\underset{p\in B\cup\{r\}}{max_\prec}\{\mu(u,p)\},w_{u,v})$. This is contradictory with the result of Lemma \ref{lem:goodProperty}.

In conclusion, any process $v\in V_{m_i}$ takes no step in any execution starting from $\rho$ and then always satisfies $spec(v)$. Then, the closure of $IM_B$ (established in Lemma \ref{lem:Imclosed}) concludes the proof.
\end{proof}

\begin{lemma}\label{lem:SBTAcontainedMax}
Any configuration of $\mathcal{LC}$ is $(S_B,n-1)$-TA contained for $spec$.
\end{lemma}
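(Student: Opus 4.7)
The plan is to derive this lemma directly from Lemma \ref{lem:LCmiclosed} specialized to the index $m_k$, i.e.\ the $\prec$-minimum element of $M$, so $\mathcal{LC}=\mathcal{LC}_{m_k}$.

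The first step is a bookkeeping observation: $V_{m_k}$ contains every $S_{B}$-correct process. Indeed, for any correct $v\notin S_B$ the quantity $\mu(v,r)$ is well defined and lies in $M=\{m_0,\ldots,m_k\}$, so $v\in P_{m_j}$ for the unique index with $\mu(v,r)=m_j$, hence $v\in V_{m_k}=\bigcup_{j=0}^{k}P_{m_j}$. (The root itself falls into $P_{m_0}$.) This identification is what allows the closure lemma for $V_{m_k}$ to be read as a statement about $S_B$-correct nodes.

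The second step is to apply closure. Fix $\rho_0\in\mathcal{LC}$ and an arbitrary execution $e=\rho_0,\rho_1,\ldots$ of $\mathcal{SSMAX}$ from $\rho_0$. By Lemma \ref{lem:LCmiclosed}, $\mathcal{LC}_{m_k}$ is closed under steps of $\mathcal{SSMAX}$, so $\rho_i\in\mathcal{LC}_{m_k}$ for every $i\geq 0$. The defining conjunct $\forall v\in V_{m_k},\ spec(v)$ then ensures that each $S_B$-correct process satisfies its specification at every $\rho_i$, which is exactly the requirement that every $\rho_i$ be $S_B$-legitimate in the sense of Definition~\ref{def:SfTAcontained}.

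For the remaining requirement, namely that no $S_B$-correct process ever modifies an O-variable, I will simply reuse what was already proved inside Lemma~\ref{lem:LCmiclosed}. Its proof established (by contradiction with the closure of $IM_{m_k}$ given by Lemma~\ref{lem:Imclosed}, together with Lemma~\ref{lem:goodProperty}) that no process of $V_{m_i}$ can ever execute a guarded action in an execution starting from a configuration in $\mathcal{LC}_{m_i}$. Specializing to $m_i=m_k$ gives precisely that no $S_B$-correct process is ever activated along $e$, so its variables $prnt$, $level$, $dist$ remain frozen.

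Finally, the bound $f\leq n-1$ is automatic: the statement of the lemma relies on the standing assumption that the root $r$ is correct, and nothing in the argument above uses any particular number of Byzantine processes beyond $r\notin B$. I do not anticipate a real obstacle; all the technical work is carried by Lemmas~\ref{lem:goodProperty}, \ref{lem:Imclosed}, and \ref{lem:LCmiclosed}, and the remaining task is the two-line translation above from closure of $\mathcal{LC}_{m_k}$ to $(S_B,n-1)$-topology-aware containment for $spec$.
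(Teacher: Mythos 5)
Your proof is correct and takes essentially the same route as the paper, whose entire argument is ``a direct application of Lemma~\ref{lem:LCmiclosed} to $\mathcal{LC}=\mathcal{LC}_{m_k}$''. You merely make explicit the bookkeeping the paper leaves implicit: that every $S_B$-correct process lies in $V_{m_k}$, that closure of $\mathcal{LC}_{m_k}$ yields $S_B$-legitimacy of every configuration, and that the no-activation claim established inside the proof of Lemma~\ref{lem:LCmiclosed} is what freezes the O-variables.
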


\begin{proof}
This is a direct application of the Lemma \ref{lem:LCmiclosed} to $\mathcal{LC}=\mathcal{LC}_{m_i}$.
\end{proof}

\begin{lemma}\label{lem:CtoLCmr}
Starting from any configuration of $\mathcal{C}$, any execution of $\mathcal{SSMAX}$ reaches in a finite time a configuration of $\mathcal{LC}_{mr}$.
\end{lemma}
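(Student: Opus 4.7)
The plan is to use weak fairness to show that after finitely many steps the system enters $\mathcal{LC}_{mr}$. Recall that membership in $\mathcal{LC}_{mr}$ requires two things: (i) $IM_{mr}$ holds, and (ii) every $v\in V_{mr}$ satisfies $spec(v)$.

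Requirement (i) is immediate. Since $mr$ is the $\prec$-maximum of $M$ and the variable $level_v$ is typed to range over $\{m\in M\mid m\preceq mr\}$ at every correct process (and the same typing constraint is assumed of a Byzantine process, whose freedom is only to pick any value of the declared type at every step), we always have $level_v\preceq mr=max_\prec\{mr,\underset{u\in B\cup\{r\}}{max_\prec}\{\mu(v,u)\}\}$ in every configuration. Hence $IM_{mr}(\rho)$ holds for every $\rho\in\mathcal{C}$, and only requirement (ii) demands work.

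For (ii), I would induct on the length $k(v)$ of a shortest maximum-metric path from $v$ to $r$. The base case $v=r$ follows by weak fairness: $r$ eventually executes $(R_r)$ if enabled, setting $prnt_r=\bot$, $level_r=mr$, $dist_r=0$, after which its guard stays false forever. For the inductive step, let $v\in V_{mr}\setminus\{r\}$ and let $u$ be a neighbour of $v$ lying on a shortest maximum-metric path from $v$ to $r$. A brief side argument shows $u\in V_{mr}\cup\{r\}$: otherwise $u\in S_B$ would give some Byzantine $b$ with $\mu(u,b)\succeq mr$, and propagating through the edge $\{u,v\}$ by monotonicity (which preserves $mr$ along this edge by choice of the path) would yield $\mu(v,b)\succeq mr$, forcing $v\in S_B$ and contradicting $v\in V_{mr}$. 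By induction $u$ permanently stabilises with $level_u=mr$, a fixed correct $prnt_u$, and an appropriate $dist_u$. From that moment on, whenever $level_v\prec mr$ the rule $(R_3)$ is enabled at $v$ with $u$ as witness, so by weak fairness $v$ fires; the round-robin implementation of $choose$ prevents $v$ from indefinitely picking bad neighbours, so within at most $|N_v|$ firings it selects a stable neighbour in $V_{mr}\cup\{r\}$ offering metric $mr$, yielding $level_v=met(mr,w_{u,v})=mr$ and an appropriate $dist_v$.

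The main obstacle is ruling out perpetual oscillation of $v$ once $level_v=mr$. I will argue that none of the three rules can then re-fire forever. Rule $(R_3)$ demands a neighbour $q$ with $met(level_q,w_{q,v})\succ mr$, impossible by typing and boundedness of $met$; rule $(R_2)$ demands $dist_v=D$, which fails after $v$'s chosen parent stabilises at distance strictly below $D-1$; rule $(R_1)$ re-fires at most once to align $dist_v$ and $level_v$ with the (now fixed) parent. Hence $v$ becomes permanently disabled and $spec(v)$ holds thereafter. Combined with the closure of $\mathcal{LC}_{mr}$ established in Lemma~\ref{lem:LCmiclosed}, this shows that every execution of $\mathcal{SSMAX}$ enters $\mathcal{LC}_{mr}$ within a finite time.
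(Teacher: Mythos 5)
Your route is genuinely different from the paper's: you build $P_{mr}$ bottom-up by induction on the hop-length of a shortest maximum-metric path, whereas the paper argues globally, using the connectivity of $P_{mr}$ (a consequence of boundedness) and a contradiction --- if some process of $P_{mr}$ took infinitely many steps, connectivity would give it a neighbour in $P_{mr}$ taking only finitely many, and the round-robin $choose$ would eventually lock it onto that neighbour, disabling it forever. Your handling of $IM_{mr}$, of weak fairness, and of the round-robin fairness of $choose$ matches the paper's in spirit.

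There is, however, a genuine gap in your inductive step, hidden in the phrases ``an appropriate $dist_u$'' and ``an appropriate $dist_v$''. For $u$ to witness the guard of $\boldsymbol{(R_3)}$ (and to belong to the sets passed to $choose$ in $\boldsymbol{(R_2)}$ and $\boldsymbol{(R_3)}$) you need $dist_u<D-1$, and your induction hypothesis does not supply this. Worse, the step as written cannot propagate it: when the round-robin settles, $v$ may adopt \emph{any} permanently stable neighbour $u'$ offering $mr$ with $dist_{u'}<D-1$ rather than $u$, giving $dist_v=dist_{u'}+1$ as large as $D-1$; or $v$ may never fire at all, remaining consistent at $dist_v=D-1$ atop a long chain of $mr$-levelled ancestors. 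Such a $v$ satisfies $spec(v)$ but is useless as a witness for its children in your shortest-path tree, so the induction stalls. The situation is in fact impossible whenever $v$ still has a child needing a witness --- a stable chain with $level=mr$ and strictly decreasing distances down to $0$, extended by that child (which has $level\prec mr$ and so lies off the chain), would form a simple path of $D+1$ processes, contradicting the definition of $D$ --- but that argument, or an explicit strengthening of the induction hypothesis bounding $dist_v$ by the hop-depth, is precisely the missing piece, and it is the hardest part of the step you are eliding.
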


\begin{proof}
Let $\rho$ be an arbitrary configuration. Then, it is obvious that $IM_{mr}(\rho)$ is satisfied. By closure of $IM_{mr}$ (proved in Lemma \ref{lem:Imclosed}), we know that $IM_{mr}$ remains satisfied in any execution starting from $\rho$.

If $r$ does not satisfy $spec(r)$ in $\rho$, then $r$ is continuously enabled. Since the scheduling is weakly fair, $r$ is activated in a finite time and then $r$ satisfies $spec(r)$ in a finite time. Denote by $\rho'$ the first configuration in which $spec(r)$ holds. Note that $r$ takes no step in any execution starting from $\rho'$.

The boundedness of $\mathcal{M}$ implies that $P_{mr}$ induces a connected subsystem. If $P_{mr}=\{r\}$, then we proved that $\rho'\in\mathcal{LC}_{mr}$ and we have the result.

Otherwise, observe that, for any configuration of an execution starting from $\rho'$, if all processes of $P_{mr}$ are not enabled, then all processes $v$ of $P_{mr}$ satisfy $spec(v)$. Assume now that there exists an execution $e$ starting from $\rho'$ in which some processes of $P_{mr}$ takes infinitely many steps. By construction, at least one of these processes (note it $v$) has a neighbor $u$ which takes only a finite number of steps in $e$ (recall that $P_{mr}$ induces a connected subsystem and that $r$ takes no step in $e$). After $u$ takes its last step of $e$, we can observe that $level_u=mr$ and $dist_u<D-1$ (otherwise, $u$ is activated in a finite time that contradicts its construction). 

As $v$ can execute consequently $\boldsymbol{(R_1)}$ only a finite number of times (since the incrementation of $dist_v$ is bounded by $D$), we can deduce that $v$ executes $\boldsymbol{(R_2)}$ or $\boldsymbol{(R_3)}$ infinitely often. In both cases, $u$ belongs to the set which is the parameter of function $choose$. By the fairness of this function, we can deduce that $prnt_v=u$ in a finite time in $e$. Then, the construction of $u$ implies that $v$ is never enabled in the sequel of $e$. This is contradictory with the construction of $e$.

Consequently, any execution starting from $\rho'$ reaches in a finite time a configuration such that all processes of $P_{mr}$ are not enabled. We can deduce that this configuration belongs to $\mathcal{LC}_{mr}$, that ends the proof.
\end{proof}

\begin{lemma}\label{lem:LCmitodistD}
For any $m_i\in M$ and for any configuration $\rho\in \mathcal{LC}_{m_i}$, any execution of $\mathcal{SSMAX}$ starting from $\rho$ reaches in a finite time a configuration such that:
\[\forall v\in I_{m_i},level_v=m_i\Rightarrow dist_v=D\]
\end{lemma}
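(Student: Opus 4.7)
The plan is to leverage the closure of $\mathcal{LC}_{m_i}$ to show that processes $v \in I_{m_i}$ which currently satisfy $level_v = m_i$ must have their parent pointer stay inside $I_{m_i}$ at the same level, so that parent chains among such processes close into cycles; and that the only $\boldsymbol{(R_1)}$-stable assignment of distances on such a cycle has every $dist$ equal to $D$.

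I would begin by invoking Lemma~\ref{lem:LCmiclosed} to fix the execution inside $\mathcal{LC}_{m_i}$, so that $IM_{m_i}$ is a permanent invariant and in particular $level_v \preceq m_i$ for every $v \in I_{m_i}$ throughout. For a process $v \in I_{m_i}$ currently satisfying $level_v = m_i$, I would then establish two structural facts. First, $\boldsymbol{(R_3)}$ is never enabled at $v$: for any neighbor $u$, combining the $IM_{m_i}$ bound on $level_u$ with monotonicity of $met$ and Lemma~\ref{lem:goodProperty} shows $met(level_u, w_{u,v}) \preceq m_i = level_v$, so no neighbor ever offers a strictly better metric. Second, whenever $\boldsymbol{(R_1)}$ is disabled at $v$, we must have $prnt_v \in I_{m_i}$ and $level_{prnt_v} = m_i$: boundedness applied to $met(level_{prnt_v}, w_{v,prnt_v}) = m_i$ yields $level_{prnt_v} \succeq m_i$; if $prnt_v$ were outside $I_{m_i}$, Lemma~\ref{lem:goodProperty} combined with monotonicity would force $met(level_{prnt_v}, w_{v,prnt_v}) \prec m_i$, a contradiction; combining $prnt_v \in I_{m_i}$ with $IM_{m_i}$ then pins $level_{prnt_v} = m_i$.

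Let $A$ denote the set $\{v \in I_{m_i} : level_v = m_i\}$ in a given configuration. The second fact implies that, in any configuration where $\boldsymbol{(R_1)}$ is disabled throughout $A$, the parent function restricted to $A$ maps $A$ into $A$. Since $A$ is finite and contains neither the root $r$ (because $\mu(r,r)=mr \succeq m_i$) nor any Byzantine node, every weakly connected component of the induced parent graph contains a directed cycle $v_1 \to \cdots \to v_k \to v_1$. On such a cycle, the disabled-$\boldsymbol{(R_1)}$ condition $dist_{v_j} = \min(dist_{v_{j+1}}+1, D)$ cannot hold with every $dist_{v_j} < D$, as that would force $dist_{v_1} = dist_{v_1}+k$; so at least one cycle member reaches $dist = D$, and the $\min$ equation then propagates this value backward through the cycle and down the attached in-trees, yielding $dist_v = D$ for every $v \in A$. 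The remaining task is to show that such a configuration is eventually reached. Using the first fact to exclude $\boldsymbol{(R_3)}$ on $A$, every step restricted to $A$ is either an $\boldsymbol{(R_1)}$-firing (which either ejects $v$ from $A$ by dropping $level_v$ strictly below $m_i$, or realigns $dist_v$ with its parent) or an $\boldsymbol{(R_2)}$-firing (which either ejects $v$ or reroutes $v$ to an $A$-neighbor whose distance is strictly below $D-1$); weak fairness together with the finite state space then closes the argument.

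The delicate part is this convergence step. Unlike in Lemma~\ref{lem:CtoLCmr}, where the correctly-built subtree is strictly protected once assembled, rule $\boldsymbol{(R_2)}$ can actively reset $dist_v$ of an $A$-process from $D$ back below $D$ while keeping $level_v = m_i$, so distances do not grow monotonically. The key leverage is that each such non-ejecting $\boldsymbol{(R_2)}$ firing requires an $A$-neighbor whose distance is still below $D-1$, a ``slack'' that $\boldsymbol{(R_1)}$ relentlessly re-saturates and that $\boldsymbol{(R_3)}$ cannot replenish (the first structural fact rules out $\boldsymbol{(R_3)}$ on $A$); pushing this through a finite-termination argument then completes the proof.
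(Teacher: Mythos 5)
Your two structural facts (that $\boldsymbol{(R_3)}$ is never enabled at a process $v\in I_{m_i}$ with $level_v=m_i$, and that if $\boldsymbol{(R_1)}$ is disabled at such a $v$ then $prnt_v$ also lies in $I_{m_i}$ with $level_{prnt_v}=m_i$) are correct and are indeed the load-bearing consequences of Lemmas~\ref{lem:goodProperty} and \ref{lem:Imclosed} here. But the proof has a genuine gap exactly where you flag ``the delicate part'': you never actually give the termination argument. ``Weak fairness together with the finite state space'' does not close anything --- a finite state space is compatible with the execution cycling forever through configurations in which some member of $A$ keeps $level=m_i$ and $dist<D$ --- and your closing sentence (``pushing this through a finite-termination argument then completes the proof'') concedes that the well-founded measure is missing. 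That measure is the entire content of the lemma; without it you have only proved that \emph{if} the system reaches a configuration where $\boldsymbol{(R_1)}$ is simultaneously disabled on all of $A$, \emph{then} all distances in $A$ equal $D$. Reaching such a globally quiescent configuration is, if anything, a stronger claim than the lemma itself (the lemma also holds vacuously if $A$ empties out), so your decomposition routes the difficulty into a target you never establish.

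The paper's proof supplies precisely the missing ingredient: the variant function $f(\rho_j)=\min_{v\in A_j}\{dist_v\}$. Your second structural fact is what makes it work: a process that remains in $A$ after an $\boldsymbol{(R_2)}$ or $\boldsymbol{(R_3)}$ firing must have adopted a parent that is itself in $A$, hence of distance at least the current minimum, so neither joins to $A$ nor re-parenting inside $A$ can decrease $f$; and any process realizing the minimum $f(\rho_j)<D$ is $\boldsymbol{(R_1)}$-enabled (its parent, being in $A$, cannot have distance $f(\rho_j)-1$), and firing $\boldsymbol{(R_1)}$ either strictly increases its distance or ejects it from $A$. Weak fairness then forces $f$ to increase, and after at most $D$ rounds $f=D$. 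Your cycle-in-the-parent-graph observation is a nice static picture but is ultimately a detour; if you replace your final two paragraphs with this monotone minimum-distance argument, the proof is complete and essentially coincides with the paper's.
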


\begin{proof}
Let $m_i$ be an arbitrary metric value of $M$ and $\rho_0$ be an arbitrary configuration of $\mathcal{LC}_{m_i}$. Let $e=\rho_0,\rho_1,\ldots$ be an execution starting from $\rho_0$.

Note that $\rho_0$ satisfies $IM_{m_i}$ by construction. Hence, we have $\forall v\in I_{m_i},level_v\preceq m_i$. The closure  of $IM_{m_i}$ (proved in Lemma \ref{lem:Imclosed}) ensures us that this property is satisfied in any configuration of $e$. 

If any process $v\in I_{m_i}$ satisfies $level_v\prec m_i$ in $\rho_0$, then the result is obvious. Otherwise, we define the following variant function. For any configuration $\rho_j$ of $e$, we denote by $A_j$ the set of processes $v$ of $I_{m_i}$ such that $level_v=m_i$ in $\rho_j$. Then, we define $f(\rho_j)=\underset{v\in A_j}{min}\{dist_v\}$. We will prove the result by showing that there exists an integer $k$ such that $f(\rho_k)=D$.

First, if a process $v$ joins $A_j$ (that is, $v\notin A_{j-1}$ but $v\in A_j$), then it takes a distance value greater or equals to $f(\rho_{j+1})$ by construction of the protocol. We can deduce that the fact that some processes join $A_j$ does not decrease $f$. Moreover, the construction of the protocol implies that a process $v$ such that $v\in A_j$ and $v\in A_{j+1}$ can not decrease its distance value in the step $\rho_j\mapsto \rho_{j+1}$.

Then, consider for a given configuration $\rho_j$ a process $v\in A_j$ such that $dist_v=f(\rho_j)<D$. We distinguish the following cases:

\begin{description}
\item[Case 1:] $level_v=met(level_{prnt_v},w_{v,prnt_v})$\\
The fact that $v\in I_{m_i}$, the boundedness of $\mathcal{M}$ and the closure of $IM_{m_i}$ imply that $prnt_v\in A_j$ (and, hence that $level_{prnt_v}=m_i$). Then, by construction of $f(\rho_j)$, we know that $dist_v\neq dist_{prnt_v}+1$ (otherwise, we do not have $dist_v=f(\rho_j)$ since $prnt_v$ has a smaller distance value). Consequently, $v$ is enabled by $\boldsymbol{(R_1)}$ in $\rho_j$ and $dist_v$ increase of at least 1 during the step $\rho_j\mapsto \rho_{j+1}$ if this rule is executed.
\item[Case 2:] $level_v\neq met(level_{prnt_v},w_{v,prnt_v})$\\
The rule $\boldsymbol{(R_1)}$ is then enabled for $v$. If this rule is executed during the step $\rho_j\mapsto \rho_{j+1}$, one of the two following sub cases appears.
\begin{description}
\item[Case 2.1:] $met(level_{prnt_v},w_{v,prnt_v})\prec m_i$\\
Then, $v$ does not belong to $A_{j+1}$ by definition. 
\item[Case 2.2:] $met(level_{prnt_v},w_{v,prnt_v})=m_i$\\
Remind that the closure of $IM_{m_i}$ implies then that $level_{prnt_v}=m_i$. By construction of $f(\rho_j)$, we have $dist_{prnt_v}\geq f(\rho_j)$ in $\rho_j$. Then, we can see that $dist_v$ increases of at least 1 during the step $\rho_j\mapsto \rho_{j+1}$.
\end{description}
\end{description}

In all cases, $v$ is enabled by $\boldsymbol{(R_1)}$ in $\rho_j$ and the execution of this rule either increases strictly $dist_v$ or removes $v$ from $A_{j+1}$.

As $I_{m_i}$ is finite and the scheduling is weakly fair, we can deduce that $f$ increases in a finite time in any execution starting from $\rho_j$. By repeating the argument at most $D$ times, we can deduce that $e$ contains a configuration $\rho_k$ such that $f(\rho_k)=D$, that shows the result.
\end{proof}

\begin{lemma}\label{lem:distDtolevelmi}
For any $m_i\in M$ and for any configuration $\rho\in \mathcal{LC}_{m_i}$ such that
$\forall v\in I_{m_i},level_v=m_i\Rightarrow dist_v=D$, any execution of $\mathcal{SSMAX}$ starting from $\rho$ reaches in a finite time a configuration such that:
\[\forall v\in I_{m_i},level_v\prec m_i\]
\end{lemma}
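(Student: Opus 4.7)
The plan is to show that the set $A(\rho_j)=\{v\in I_{m_i}\mid level_v(\rho_j)=m_i\}$ eventually becomes empty. The key lever is that by boundedness, any $v\in A$ has $level_{prnt_v}\succeq m_i$; if moreover $prnt_v\notin I_{m_i}$, then $IM_{m_i}$ combined with Lemma~\ref{lem:goodProperty} and monotonicity force $met(level_{prnt_v},w_{v,prnt_v})\preceq\max_{p\in B\cup\{r\}}\mu(v,p)\prec m_i$ (the last inequality since $v\in I_{m_i}$), so rule $\boldsymbol{(R_1)}$ is enabled at $v$ and, by weak fairness, fires to push $level_v\prec m_i$, ejecting $v$ from $A$. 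Hence only processes whose parent also lies in $A\cap I_{m_i}$ can persist in $A$.

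Since $A\cap I_{m_i}$ is finite, such persistent processes must form at least one directed cycle $v_1\to\cdots\to v_k\to v_1$ in the parent relation, all of whose members have $level=m_i$ (and, by the lemma hypothesis at $\rho$, initially $dist=D$). A case split on each cycle neighbor shows that $\boldsymbol{(R_3)}$ is never enabled on a cycle process: a neighbor $u\in I_{m_i}$ has $level_u\preceq m_i$ by $IM_{m_i}$, while a neighbor $u\notin I_{m_i}$ gives $met(level_u,w_{u,v})\prec m_i$ by Lemma~\ref{lem:goodProperty} combined with $IM_{m_i}$ and $v\in I_{m_i}$, so in both cases the trigger $level_v\prec met(level_u,w_{u,v})$ fails. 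Hence only $\boldsymbol{(R_1)}$ or $\boldsymbol{(R_2)}$ can remove the cycle.

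If some cycle edge $(v_i,v_{i+1})$ satisfies $met(m_i,w_{v_i,v_{i+1}})\prec m_i$ (a non--fixed-point edge), boundedness implies $\boldsymbol{(R_1)}$ is enabled at $v_i$ and fires by weak fairness, breaking the cycle by making $level_{v_i}\prec m_i$. Otherwise $m_i$ is a fixed point along every cycle edge. In this fixed-point case I argue that some cycle process admits a neighbor with $dist<D-1$, so that $\boldsymbol{(R_2)}$ becomes enabled. The argument combines the lemma hypothesis ($dist=D$ on all initial cycle processes), the stabilization of $V_{m_i}$ (Lemma~\ref{lem:LCmiclosed}) together with the bound $dist\leq D-1$ inherited by legitimately behaving processes, and the connectivity of the distributed system together with $dist_r=0$: if every neighbor of every cycle process had $dist\geq D-1$, then tracing a path in the graph from $r$ to the cycle and locating the first edge along which $dist$ rises above $D-2$ leads to a contradiction with the stabilized status of $V_{m_i}$. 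Once $\boldsymbol{(R_2)}$ is enabled on some $v_i$, the round-robin property of $choose$ guarantees that within a bounded number of activations $v_i$ reparents to a neighbor whose $met$ value is $\prec m_i$ (such a neighbor is forced to exist by Lemma~\ref{lem:goodProperty} applied to all neighbors of $v_i\in I_{m_i}$), again removing $v_i$ from $A$.

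The main obstacle is the fixed-point cycle case, where establishing that such a cycle necessarily admits a cycle process with a neighbor of distance strictly less than $D-1$ requires the careful topological argument sketched above. Once that topological ingredient is granted, a finite induction on $|A|$ (each decrement happens in finite time by weak fairness and the fairness of $choose$) yields a configuration in which $\forall v\in I_{m_i}, level_v\prec m_i$, which is exactly the desired conclusion.
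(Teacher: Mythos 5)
Your proof has the right general shape---isolate the set $A$ of processes of $I_{m_i}$ stuck at level $m_i$, observe that $\boldsymbol{(R_3)}$ can never fire there, and argue that $\boldsymbol{(R_1)}$ or $\boldsymbol{(R_2)}$ eventually ejects everyone---but it has two genuine gaps. First, your ``finite induction on $|A|$'' presupposes that $A$ never grows, and it can: a process $v\in I_{m_i}$ with $level_v\prec m_i$ whose parent lies in $A$ across a fixed-point edge (i.e.\ $met(m_i,w_{v,prnt_v})=m_i$) is enabled by $\boldsymbol{(R_1)}$ and, upon firing, enters $A$ with $dist_v=\min(D+1,D)=D$. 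The paper devotes the first half of its proof to exactly this issue: it shows that after a finite prefix the set (its $E_{\rho_j}$) becomes monotonically non-increasing, and only then argues that every activated member leaves it. Without that prefix your termination measure is not well-founded.

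Second, and more seriously, the fixed-point-cycle case---which you correctly single out as the crux---is not established by the argument you sketch. You need some cycle process to see a neighbor $u$ with $dist_u<D-1$ so that $\boldsymbol{(R_2)}$ fires. Tracing a path from $r$ to the cycle and ``locating the first edge along which $dist$ rises above $D-2$'' yields no contradiction with Lemma~\ref{lem:LCmiclosed}: such a path in general leaves $V_{m_i}$ and passes through processes of $S_B\setminus I_{m_i}$, which sit inside the containment area and may keep arbitrary, Byzantine-influenced $dist$ values forever, or through processes of $P_{m_j}$ with $j>i$, which are not yet stabilized at stage $i$. Moreover, even a stabilized process of $V_{m_i}$ is only guaranteed $dist\leq D-1$, which does not meet the strict guard $dist_u<D-1$ of $\boldsymbol{(R_2)}$. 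So the ``topological ingredient'' you defer is precisely the step that fails as stated; the paper instead closes the argument by appealing to the choice of the bound $D$ to assert that every process of $I_{m_i}$ with $dist_v=D$ is activated in finite time, without routing the argument through connectivity to $r$.
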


\begin{proof}
Let $m_i\in M$ be an arbitrary metric value and $\rho_0$ be a configuration of $\mathcal{LC}_{m_i}$ such that $\forall v\in I_{m_i},level_v=m_i\Rightarrow dist_v=D$. Let $e=\rho_0,\rho_1,\ldots$ be an arbitrary execution starting from $\rho_0$.

For any configuration $\rho_j$ of $e$, let us denote $E_{\rho_j}=\{v\in I_{m_i}|level_v=m_i\}$. By the closure of $IM_{m_i}$ (which holds by definition in $\rho_0$) established in Lemma \ref{lem:Imclosed}, we obtain the result if there exists a configuration $\rho_j$ of $e$ such that $E_{\rho_j}=\emptyset$.

If there exists some processes $v\in I_{m_i}\setminus E_{\rho_0}$ (and hence $level_v\prec m_i$) such that $prnt_v\in E_{\rho_0}$ and $met(level_{prnt_v},w_{v,prnt_v})=m_i$ in $\rho_0$, then we can observe that these processes are continuously enabled by $\boldsymbol{(R_1)}$. As the scheduling is weakly fair, $v$ activates this rule in a finite time and then, $level_v=m_i$ and $dist_v=D$. In other words, $v$ joins $E_{\rho_l}$ for a given integer $l$. We can conclude that there exists an integer $k$ such that for any $v\in I_{m_i}\setminus E_{\rho_0}$, either $prnt_v\notin E_{\rho_k}$ or $met(level_{prnt_v},w_{v,prnt_v})\prec m_i$.

Then, we prove that, for any integer $j\geq k$, we have $E_{\rho_{j+1}}\subseteq E_{\rho_j}$. For the sake of contradiction, assume that there exists an integer $j\geq k$ and a process $v\in I_{m_i}$ such that $v\in E_{\rho_{j+1}}$ and $v\notin E_{\rho_j}$. Without loss of generality, assume that $j$ is the smallest integer which performs these properties. Let us study the following cases:
\begin{description}
\item[Case 1:] If $v$ activates $\boldsymbol{(R_1)}$ during the step $\rho_j\mapsto \rho_{j+1}$, then we know that $prnt_v\notin E_{\rho_j}$ in $\rho_j$ (otherwise, we have a contradiction with the fact that $v\in E_{\rho_{j+1}}$). But in this case, we have: $level_{prnt_v}\prec m_i$. The boundedness of $\mathcal{M}$ implies that $level_v\prec m_i$ in $\rho_{j+1}$ that contradicts the fact that $v\in E_{\rho_{j+1}}$.
\item[Case 2:] If $v$ activates either $\boldsymbol{(R_2)}$ or $\boldsymbol{(R_3)}$ during the step $\rho_j\mapsto \rho_{j+1}$, then $v$ chooses a new parent which has a distance smaller than $D-1$ in $\rho_j$. This implies that this new parent does not belongs to $E_{\rho_j}$. Then, we have $level_{prnt_v}\prec m_i$. The boundedness of $\mathcal{M}$ implies that $level_v\prec m_i$ in $\rho_{j+1}$ that contradicts the fact that $v\in E_{\rho_{j+1}}$.
\end{description}
In the two cases, our claim is satisfied. In other words, there exists a point of the execution afterwards the set $E$ can not grow (this implies that, if a process leave the set $E$, it is a definitive leaving).

Assume now that there exists a step $\rho_j\mapsto \rho_{j+1}$ (with $j\geq k$) such that a process $v\in E_{\rho_j}$ is activated. Observe that the closure of $IM_{m_i}$ implies that $v$ can not be activated by the rule $\boldsymbol{(R_3)}$. If $v$ activates $\boldsymbol{(R_1)}$ during this step, then $v$ modifies its level during this step (otherwise, we have a contradiction with the fact that $level_{prnt_v}=m_i\Rightarrow dist_v=D$). The closure of $IM_{m_i}$ implies that $v$ leaves the set $E$ during this step. If $v$ activates $\boldsymbol{(R_2)}$ during this step, then $v$ chooses a new parent which has a distance smaller than $D-1$ in $\rho_j$. This implies that this new parent does not belongs to $E_{\rho_j}$. Then, we have $level_{prnt_v}\prec m_i$. The boundedness of $\mathcal{M}$ implies that $level_v\prec m_i$ in $\rho_{j+1}$. In other words, if a process of $E_{\rho_j}$ is activated during the step $\rho_j\mapsto \rho_{j+1}$, then it satisfies $v\notin E_{\rho_{j+1}}$.

Finally, observe that the construction of the protocol and the construction of the bound $D$ ensures us that any process $v\in I_{m_i}$ such that $dist_v=D$ is activated in a finite time. In conclusion, we obtain that there exists an integer $j$ such that $E_{\rho_j}=\emptyset$, that implies the result.
\end{proof}

\begin{lemma}\label{lem:CmitoIMmi+1}
For any $m_i\in M$ and for any configuration $\rho\in \mathcal{LC}_{m_i}$, any execution of $\mathcal{SSMAX}$ starting from $\rho$ reaches in a finite time a configuration $\rho'$ such that $IM_{m_{i+1}}$ holds.
\end{lemma}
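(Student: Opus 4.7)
The plan is to obtain $\rho'$ by chaining the two preceding lemmas and then verifying the predicate $IM_{m_{i+1}}$ by a small case analysis based on whether a process lies in $I_{m_i}$ or not.

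First I would apply Lemma~\ref{lem:LCmitodistD} to the initial configuration $\rho \in \mathcal{LC}_{m_i}$: it guarantees that any execution reaches, in a finite number of steps, an intermediate configuration $\rho_1$ in which $\forall v\in I_{m_i},\ level_v=m_i \Rightarrow dist_v=D$. Since $\mathcal{LC}_{m_i}$ is closed under actions of $\mathcal{SSMAX}$ by Lemma~\ref{lem:LCmiclosed}, we still have $\rho_1\in\mathcal{LC}_{m_i}$, so the hypotheses of Lemma~\ref{lem:distDtolevelmi} are met. Applying that lemma to $\rho_1$, every execution reaches in finite time a configuration $\rho'$ such that $\forall v\in I_{m_i},\ level_v\prec m_i$. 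By closure of $IM_{m_i}$ (Lemma~\ref{lem:Imclosed}), $IM_{m_i}(\rho')$ still holds in $\rho'$.

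It then remains to show that $IM_{m_{i+1}}(\rho')$ holds. Fix a process $v\in V$ and split on whether $v\in I_{m_i}$. If $v\notin I_{m_i}$, then by definition $\underset{u\in B\cup\{r\}}{max_\prec}\{\mu(v,u)\}\succeq m_i\succeq m_{i+1}$, so $IM_{m_i}(\rho')$ directly gives $level_v\preceq\underset{u\in B\cup\{r\}}{max_\prec}\{\mu(v,u)\}$, which yields $level_v\preceq max_\prec\bigl\{m_{i+1},\underset{u\in B\cup\{r\}}{max_\prec}\{\mu(v,u)\}\bigr\}$. If $v\in I_{m_i}$, then by the property of $\rho'$ we have $level_v\prec m_i$; since $m_{i+1}$ is by construction the largest element of $M$ strictly smaller than $m_i$, this forces $level_v\preceq m_{i+1}$, which again yields the required inequality.

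The argument is therefore essentially a composition of the previous two lemmas, and the only non-routine point is the last case analysis, which relies crucially on the discrete ordering of $M$ (so that $\prec m_i$ implies $\preceq m_{i+1}$). I do not expect any obstacle beyond carefully invoking closure of $\mathcal{LC}_{m_i}$ between the two invocations in order to apply Lemma~\ref{lem:distDtolevelmi}.
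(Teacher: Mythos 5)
Your proposal is correct and follows the same route as the paper, which simply declares the lemma a direct consequence of Lemmas~\ref{lem:LCmitodistD} and~\ref{lem:distDtolevelmi}; you merely make explicit the chaining (via closure of $\mathcal{LC}_{m_i}$) and the final case analysis showing $IM_{m_{i+1}}$ holds. The key observation that $level_v\prec m_i$ forces $level_v\preceq m_{i+1}$ by the decreasing enumeration of $M$ is exactly what the paper leaves implicit.
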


\begin{proof}
This result is a direct consequence of Lemmas \ref{lem:LCmitodistD} and \ref{lem:distDtolevelmi}.
\end{proof}

\begin{lemma}\label{lem:LCmitoLCmi+1}
For any $m_i\in M$ and for any configuration $\rho\in \mathcal{LC}_{m_i}$, any execution of $\mathcal{SSMAX}$ starting from $\rho$ reaches in a finite time a configuration of $\mathcal{LC}_{m_{i+1}}$.
\end{lemma}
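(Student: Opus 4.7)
The plan is to combine the two ingredients already established and then repeat, at level $m_{i+1}$, the ``anchor + round-robin'' argument used for the root component in Lemma \ref{lem:CtoLCmr}. First, by Lemma \ref{lem:LCmiclosed}, any execution from $\rho$ stays inside $\mathcal{LC}_{m_i}$, so every process in $V_{m_i}$ remains not enabled, keeps satisfying $spec$, and never changes its O-variables. Second, by Lemma \ref{lem:CmitoIMmi+1} (together with the closure of $IM_{m_{i+1}}$ from Lemma \ref{lem:Imclosed}), the execution reaches in finite time a configuration $\rho'$ at which $IM_{m_{i+1}}$ holds, and continues to hold thereafter. It remains to show that from $\rho'$ every process in $P_{m_{i+1}}$ eventually also satisfies $spec$.

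For $v \in P_{m_{i+1}}$ we have $v \in V \setminus S_B$, hence $\max_{u \in B \cup \{r\}} \mu(v, u) = \mu(v, r) = m_{i+1}$, so $IM_{m_{i+1}}$ forces $level_v \preceq m_{i+1}$. Since $\mu(v, r) = m_{i+1}$, there exists a neighbor $u$ with $met(\mu(u, r), w_{u, v}) = m_{i+1}$ and, by boundedness, $\mu(u, r) \succeq m_{i+1}$. Such a $u$ cannot lie in $S_B$: if it did, applying monotonicity to the Byzantine witness path of $u$ would give $\max_{b \in B} \mu(v, b) \succeq met(\mu(u, r), w_{u, v}) = m_{i+1} = \mu(v, r)$, contradicting $v \notin S_B$. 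Therefore $u \in V_{m_i} \cup P_{m_{i+1}} = V_{m_{i+1}}$, so $V_{m_{i+1}}$ induces a connected subgraph whose ``boundary'' toward the root is the already-stabilized $V_{m_i}$.

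The remainder of the argument mirrors the second half of Lemma \ref{lem:CtoLCmr}. Assume for contradiction that some $v \in P_{m_{i+1}}$ takes infinitely many steps in an execution starting from $\rho'$. By the connectivity above and the stability of $V_{m_i}$, we may choose such a $v$ that has a neighbor $u \in V_{m_{i+1}}$ taking only finitely many steps; after $u$'s last step, the stable configuration of $u$ must satisfy $level_u \succeq m_{i+1}$, $met(level_u, w_{u, v}) = m_{i+1}$, and $dist_u < D - 1$, otherwise $u$ would be re-enabled. Rule $\boldsymbol{(R_1)}$ can fire only finitely often on $v$ (since $dist_v$ is bounded by $D$), so $v$ must activate $\boldsymbol{(R_2)}$ or $\boldsymbol{(R_3)}$ infinitely often. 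In either case $u$ belongs to the set passed to $choose$, and by the round-robin fairness of $choose$ the assignment $prnt_v := u$ occurs in finite time, after which $v$'s guards are all false---a contradiction. Hence in finite time no process in $P_{m_{i+1}}$ is enabled, which together with the stability of $V_{m_i}$ and the closure of $IM_{m_{i+1}}$ gives a configuration in $\mathcal{LC}_{m_{i+1}}$. The delicate point in this plan is the connectivity claim in the second paragraph: ruling out an optimal neighbor $u$ lying in $S_B$ hinges on the combined use of boundedness, monotonicity, and the precise definition of $S_B$; the final round-robin/fairness step is then a routine transcription of the argument already carried out for $P_{mr}$ in Lemma \ref{lem:CtoLCmr}.
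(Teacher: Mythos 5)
Your proof is correct and follows essentially the same route as the paper's: reach a configuration satisfying $IM_{m_{i+1}}$ via Lemma~\ref{lem:CmitoIMmi+1}, invoke the closure results of Lemmas~\ref{lem:Imclosed} and~\ref{lem:LCmiclosed}, and then repeat the anchor-plus-round-robin-fairness argument of Lemma~\ref{lem:CtoLCmr} on $P_{m_{i+1}}$ with the stabilized $V_{m_i}$ as boundary. Your second paragraph merely makes explicit (via boundedness, monotonicity and the definition of $S_B$) the connectivity of $V_{m_{i+1}}$ that the paper asserts without proof, which is a welcome but not structurally different addition.
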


\begin{proof}
Let $m_i$ be a metric value of $M$ and $\rho$ be an arbitrary configuration of $\mathcal{LC}_{m_i}$. We know by Lemma \ref{lem:CmitoIMmi+1} that any execution starting from $\rho$ reaches in a finite time a configuration $\rho'$ such that $IM_{m_{i+1}}$ holds. By closure of $IM$ and of $\mathcal{LC}_{m_i}$ (established respectively in Lemma \ref{lem:Imclosed} and \ref{lem:LCmiclosed}), we know that any configuration of any execution starting from $\rho'$ belongs to $\mathcal{LC}_{m_i}$ and satisfies $IM_{m_{i+1}}$.

We know that $V_{m_i}\neq\emptyset$ since $r\in V_{m_i}$ for any $i\geq 0$. Remind that $V_{m_{i+1}}$ is connected by the boundedness of $\mathcal{M}$. Then, we know that there exists at least one process $p$ of $P_{m_{i+1}}$ which has a neighbor $q$ in $V_{m_{i}}$ such that $\mu(p,r)=met(\mu(q,r),w_{p,q})$. Moreover, Lemma \ref{lem:LCmiclosed} ensures us that any process of $V_{m_i}$ takes no step in any executions tarting from $\rho'$.

Observe that, for any configuration of an execution starting from $\rho'$, if all processes of $P_{m_{i+1}}$ are not enabled, then all processes $v$ of $P_{m_{i+1}}$ satisfy $spec(v)$. Assume now that there exists an execution $e$ starting from $\rho'$ in which some processes of $P_{m_{i+1}}$ take infinitely many steps. By construction, at least one of these processes (note it $v$) has a neighbor $u$ such that $\mu(v,r)=met(\mu(u,r),w_{v,u})$ which takes only a finite number of steps in $e$ (recall the construction of $p$). After $u$ takes its last step of $e$, we can observe that $level_u=\mu(u,r)$ and $dist_u<D-1$ (otherwise, $u$ is activated in a finite time that contradicts its construction). 

As $v$ can execute consequently $\boldsymbol{(R_1)}$ only a finite number of times (since the incrementation of $dist_v$ is bounded by $D$), we can deduce that $v$ executes $\boldsymbol{(R_2)}$ or $\boldsymbol{(R_3)}$ infinitely often. In both cases, $u$ belongs to the set which is the parameter of function $choose$ (remind that $IM_{m_{i+1}}$ is satisfied and that $u$ has the better possible metric along $v$'s neighbors). By the construction of this function, we can deduce that $prnt_v=u$ in a finite time in $e$. Then, the construction of $u$ implies that $v$ is never enabled in the sequel of $e$. This is contradictory with the construction of $e$.

Consequently, any execution starting from $\rho'$ reaches in a finite time a configuration such that all processes of $P_{m_{i+1}}$ are not enabled. We can deduce that this configuration belongs to $\mathcal{LC}_{m_{i+1}}$, that ends the proof.
\end{proof}

\begin{lemma}\label{lem:convergenceLCMax}
Starting from any configuration, any execution of $\mathcal{SSMAX}$ reaches a configuration of $\mathcal{LC}$ in a finite time.
\end{lemma}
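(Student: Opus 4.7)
The plan is to prove this lemma as a straightforward chaining of the convergence results established in the preceding lemmas. The set $M$ is finite, enumerated as $m_0 = mr, m_1, \ldots, m_k$ with $m_{i+1} \prec m_i$, and by definition $\mathcal{LC} = \mathcal{LC}_{m_k}$, so it suffices to exhibit a finite staircase of convergences $\mathcal{C} \rightsquigarrow \mathcal{LC}_{m_0} \rightsquigarrow \mathcal{LC}_{m_1} \rightsquigarrow \cdots \rightsquigarrow \mathcal{LC}_{m_k}$, each step taking finite time.

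First, I would invoke Lemma \ref{lem:CtoLCmr} on the arbitrary starting configuration $\rho_0$ to obtain an index $j_0$ such that $\rho_{j_0} \in \mathcal{LC}_{m_0} = \mathcal{LC}_{mr}$. Then I would proceed by finite induction on $i \in \{0, 1, \ldots, k-1\}$: assuming the execution has reached a configuration $\rho_{j_i} \in \mathcal{LC}_{m_i}$, applying Lemma \ref{lem:LCmitoLCmi+1} yields an index $j_{i+1} \geq j_i$ such that $\rho_{j_{i+1}} \in \mathcal{LC}_{m_{i+1}}$. Since $M$ is finite with exactly $k+1$ elements, after at most $k$ such applications the execution reaches a configuration in $\mathcal{LC}_{m_k} = \mathcal{LC}$ in finite time.

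There is essentially no obstacle here, since all the technical work (handling the root, the closure of $IM_m$ and of $\mathcal{LC}_{m_i}$, the distance-saturation argument forcing spurious $m_i$-levels to propagate to distance $D$ and then vanish, and the stabilization of the parent pointers via the round-robin $choose$ macro) was already carried out in Lemmas \ref{lem:Imclosed} through \ref{lem:LCmitoLCmi+1}. The only small point to make explicit is that the closure Lemma \ref{lem:LCmiclosed} guarantees that once the execution enters $\mathcal{LC}_{m_i}$ it remains there, so the induction hypothesis at step $i$ is preserved while we wait for the next convergence to $\mathcal{LC}_{m_{i+1}}$; this is what makes the finite chaining legitimate rather than a merely eventual property that could be destroyed later.
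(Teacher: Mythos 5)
Your proposal is correct and follows essentially the same route as the paper: apply Lemma \ref{lem:CtoLCmr} to reach $\mathcal{LC}_{m_0}=\mathcal{LC}_{mr}$, then chain Lemma \ref{lem:LCmitoLCmi+1} at most $k$ times to arrive at $\mathcal{LC}_{m_k}=\mathcal{LC}$. Your explicit remark that the closure of $\mathcal{LC}_{m_i}$ (Lemma \ref{lem:LCmiclosed}) legitimizes the chaining is a small but welcome addition that the paper leaves implicit.
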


\begin{proof}
Let $\rho$ be an arbitrary configuration. We know by Lemma \ref{lem:CtoLCmr} that any execution starting from $\rho$ reaches in a finite time a configuration of $\mathcal{LC}_{mr}=\mathcal{LC}_{m_0}$. Then, we can apply at most $k$ times the result of Lemma \ref{lem:LCmitoLCmi+1} to obtain that any execution starting from $\rho$ reaches in a finite time a configuration of $\mathcal{LC}_{m_k}=\mathcal{LC}$, that proves the result.
\end{proof}

\begin{theorem}\label{th:SSMAXstrict}
$\mathcal{SSMAX}$ is a $(S_B,n-1)$-TA-strictly stabilizing protocol for $spec$.
\end{theorem}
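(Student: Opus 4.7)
The plan is to derive this theorem immediately by combining the two key lemmas that have already been established, namely the convergence lemma and the containment lemma, so that together they match the two clauses of Definition \ref{def:SfTAStrictstabilizing}.

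First, I would unfold Definition \ref{def:SfTAStrictstabilizing}: to show $(S_B,n-1)$-TA-strict stabilization for $spec$, I must show that for any initial configuration $\rho_0$ and any schedule (with at most $n-1$ Byzantine processes, which is the trivial upper bound since the root $r$ is non-Byzantine by hypothesis), the resulting execution $e=\rho_0,\rho_1,\ldots$ contains some configuration $\rho_i$ that is $(S_B,n-1)$-TA-contained for $spec$ in the sense of Definition \ref{def:SfTAcontained}.

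Second, I would invoke Lemma \ref{lem:convergenceLCMax}, which states that starting from any configuration, any execution of $\mathcal{SSMAX}$ reaches in a finite time a configuration $\rho_i \in \mathcal{LC}$. This gives me the existence of the required index $i$ in the execution.

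Finally, I would invoke Lemma \ref{lem:SBTAcontainedMax}, which states that every configuration of $\mathcal{LC}$ is $(S_B,n-1)$-TA-contained for $spec$. Applied to $\rho_i$, this immediately yields that every subsequent configuration is $S_B$-legitimate and that no $S_B$-correct process ever modifies its O-variables beyond $\rho_i$, concluding the proof. There is no real obstacle here since all the hard technical work (closure of $\mathcal{LC}_{m_i}$, the inductive climb from $\mathcal{LC}_{m_0}$ to $\mathcal{LC}_{m_k}=\mathcal{LC}$ via Lemmas \ref{lem:CtoLCmr} and \ref{lem:LCmitoLCmi+1}, and the fairness argument on the round-robin $choose$ macro) has been absorbed into those two lemmas; the theorem itself is a two-line corollary.
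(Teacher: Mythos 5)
Your proof is correct and is essentially identical to the paper's own argument: the paper also derives the theorem directly by combining Lemma \ref{lem:convergenceLCMax} (convergence to $\mathcal{LC}$) with Lemma \ref{lem:SBTAcontainedMax} (every configuration of $\mathcal{LC}$ is $(S_B,n-1)$-TA contained). Your version merely spells out the unfolding of Definition \ref{def:SfTAStrictstabilizing}, which the paper leaves implicit.
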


\begin{proof}
This result is a direct consequence of Lemmas \ref{lem:SBTAcontainedMax} and \ref{lem:convergenceLCMax}.
\end{proof}

Note that Theorem \ref{th:impTAstrict} ensures us that $S_B$ is the optimal containment area for a topology-aware strictly stabilizing protocol for $spec$.

\section{Conclusion}

We introduced a new notion of Byzantine containment in self-stabilization: the topology-aware strict stabilization. This notion relaxes the constraint on the containment radius of the strict stabilization to a containment area. In other words, the set of correct processes which may be infinitely often disturbed by Byzantine processes is a function depending on the topology of the system and on the actual location of Byzantine processes. We illustrated the relevance of this notion by providing a topology-aware strictly stabilizing protocol for the maximum metric tree construction problem which does not admit strictly stabilizing solution. Moreover, our protocol performs the optimal containment area with respect to the topology-aware strict stabilization.

Our work raises some opening questions. Number of problems do not accept strictly stabilizing solution. Does any of them admit a topology-aware strictly stabilizing solution ? Is it possible to give a necessary and/or sufficient condition for a problem to admit a topology-aware strictly stabilizing solution ? What happens if we consider only bounded Byzantine behavior ?

\bibliographystyle{plain}
\bibliography{biblio}

\begin{thebibliography}{10}

\bibitem{BDH08c}
Michael Ben-Or, Danny Dolev, and Ezra~N. Hoch.
\newblock Fast self-stabilizing byzantine tolerant digital clock
  synchronization.
\newblock In Rida~A. Bazzi and Boaz Patt-Shamir, editors, {\em PODC}, pages
  385--394. ACM, 2008.

\bibitem{DD05c}
Ariel Daliot and Danny Dolev.
\newblock Self-stabilization of byzantine protocols.
\newblock In Ted Herman and S\'{e}bastien Tixeuil, editors, {\em
  Self-Stabilizing Systems}, volume 3764 of {\em Lecture Notes in Computer
  Science}, pages 48--67. Springer, 2005.

\bibitem{D74j}
Edsger~W. Dijkstra.
\newblock Self-stabilizing systems in spite of distributed control.
\newblock {\em Commun. ACM}, 17(11):643--644, 1974.

\bibitem{DH07cb}
Danny Dolev and Ezra~N. Hoch.
\newblock On self-stabilizing synchronous actions despite byzantine attacks.
\newblock In Andrzej Pelc, editor, {\em DISC}, volume 4731 of {\em Lecture
  Notes in Computer Science}, pages 193--207. Springer, 2007.

\bibitem{D00b}
S.~Dolev.
\newblock {\em Self-stabilization}.
\newblock MIT Press, March 2000.

\bibitem{DW04j}
Shlomi Dolev and Jennifer~L. Welch.
\newblock Self-stabilizing clock synchronization in the presence of byzantine
  faults.
\newblock {\em J. ACM}, 51(5):780--799, 2004.

\bibitem{GS99c}
Mohamed~G. Gouda and Marco Schneider.
\newblock Stabilization of maximal metric trees.
\newblock In Anish Arora, editor, {\em WSS}, pages 10--17. IEEE Computer
  Society, 1999.

\bibitem{GS03j}
Mohamed~G. Gouda and Marco Schneider.
\newblock Maximizable routing metrics.
\newblock {\em IEEE/ACM Trans. Netw.}, 11(4):663--675, 2003.

\bibitem{HDD06c}
Ezra~N. Hoch, Danny Dolev, and Ariel Daliot.
\newblock Self-stabilizing byzantine digital clock synchronization.
\newblock In Ajoy~Kumar Datta and Maria Gradinariu, editors, {\em SSS}, volume
  4280 of {\em Lecture Notes in Computer Science}, pages 350--362. Springer,
  2006.

\bibitem{LSP82j}
Leslie Lamport, Robert~E. Shostak, and Marshall~C. Pease.
\newblock The byzantine generals problem.
\newblock {\em ACM Trans. Program. Lang. Syst.}, 4(3):382--401, 1982.

\bibitem{MT07j}
Toshimitsu Masuzawa and S\'{e}bastien Tixeuil.
\newblock Stabilizing link-coloration of arbitrary networks with unbounded
  byzantine faults.
\newblock {\em International Journal of Principles and Applications of
  Information Science and Technology (PAIST)}, 1(1):1--13, December 2007.

\bibitem{NA02c}
Mikhail Nesterenko and Anish Arora.
\newblock Tolerance to unbounded byzantine faults.
\newblock In {\em 21st Symposium on Reliable Distributed Systems (SRDS 2002)},
  page~22. IEEE Computer Society, 2002.

\bibitem{SOM05c}
Yusuke Sakurai, Fukuhito Ooshita, and Toshimitsu Masuzawa.
\newblock A self-stabilizing link-coloring protocol resilient to byzantine
  faults in tree networks.
\newblock In {\em Principles of Distributed Systems, 8th International
  Conference, OPODIS 2004}, volume 3544 of {\em Lecture Notes in Computer
  Science}, pages 283--298. Springer, 2005.

\bibitem{T09bc}
S\'{e}bastien Tixeuil.
\newblock {\em Algorithms and Theory of Computation Handbook, Second Edition},
  chapter Self-stabilizing Algorithms, pages 26.1--26.45.
\newblock Chapman \& Hall/CRC Applied Algorithms and Data Structures. CRC
  Press, Taylor \& Francis Group, November 2009.

\end{thebibliography}

\end{document}